\documentclass[lettersize,journal]{IEEEtran}
\usepackage{cite}
\usepackage{amsmath,amssymb,amsfonts}
\usepackage{amsthm}
\usepackage{algorithmic}
\usepackage{graphicx}
\usepackage{textcomp}
\usepackage{algorithm}
\usepackage{array}
\usepackage[caption=false,font=normalsize,labelfont=sf,textfont=sf]{subfig}
\usepackage{stfloats}
\usepackage{booktabs}
\usepackage[hyphens]{url}
\usepackage{verbatim}
\usepackage{soul}
\usepackage{balance}
\usepackage{stmaryrd} 
\usepackage[colorlinks,
linkcolor=black,
anchorcolor=black,
citecolor=black
]{hyperref}

\hyphenation{op-tical net-works semi-conduc-tor IEEE-Xplore}

\newtheorem{definition}{Definition}
\newtheorem{theorem}{Theorem}

\newcommand{\main}{ObliuSky}
\newcommand{\csa}{$CS_{\{1,2\}}$}
\newcommand{\llb}{\llbracket}
\newcommand{\rrb}{\rrbracket}

\begin{document}

\title{ObliuSky: Oblivious User-Defined Skyline Query Processing in the Cloud}

\author{Yifeng Zheng, Weibo Wang, Songlei Wang, Zhongyun Hua, and Yansong Gao
    \thanks{Yifeng Zheng, Weibo Wang, Songlei Wang, and Zhongyun Hua are with the School of Computer Science and Technology, Harbin Institute of Technology, Shenzhen, Guangdong 518055, China (e-mail: yifeng.zheng@hit.edu.cn, weibo.wang.hitsz@outlook.com, songlei.wang@outlook.com, huazhongyun@hit.edu.cn).}
    
    \thanks{Yansong Gao is with Data61, CSIRO, Sydney, Australia (e-mail: garrison.gao@data61.csiro.au).}
}

\IEEEtitleabstractindextext{
		\begin{abstract}

        The proliferation of cloud computing has greatly spurred the popularity of outsourced database storage and management, in which the cloud holding outsourced databases can process database queries on demand. Among others, skyline queries play an important role in the database field due to its prominent usefulness in multi-criteria decision support systems. To accommodate the tailored needs of users, user-defined skyline query has recently emerged as an intriguing advanced type of skyline query, which allows users to define custom preferences in their skyline queries (including the target attributes, preferred dominance relations, and range constraints on the target attributes). However, user-defined skyline query services, if deployed in the cloud, may raise critical privacy concerns as the outsourced databases and skyline queries may contain proprietary/privacy-sensitive information, and the cloud might even suffer from data breaches. In light of the above, this paper presents ObliuSky, a new system framework enabling oblivious user-defined skyline query processing in the cloud. ObliuSky departs from the state-of-the-art prior work by not only providing confidentiality protection for the content of the outsourced database, the user-defined skyline query, and the query results, but also making the cloud oblivious to the data patterns (e.g., user-defined dominance relations among database points and search access patterns) which may indirectly cause data leakages. We formally analyze the security guarantees and conduct extensive performance evaluations. The results show that while achieving much stronger security guarantees than the state-of-the-art prior work, ObliuSky is superior in database and query encryption efficiency, with practically affordable query latency.
      \end{abstract}
		
		\begin{IEEEkeywords}
			Database service outsourcing, secure user-defined skyline query, cloud computing, privacy preservation
		\end{IEEEkeywords}
	}

\maketitle

\IEEEdisplaynontitleabstractindextext
	
\IEEEpeerreviewmaketitle

\section{Introduction}

It has been increasingly common for enterprises and organizations to outsource the storage and management of databases to the cloud, which can then provide services for processing queries over the outsourced databases. 
Among others, skyline queries play an important role in the database field, garnering wide attentions in various application domains due to its prominent advantages in multi-criteria decision support systems \cite{balke2004efficient, huang2006skyline, deng2007multi}.
However, skyline query services, if deployed in the cloud, may raise critical privacy concerns regarding the outsourced databases and skyline queries, which may contain proprietary/privacy-sensitive information.
Moreover, the cloud might even suffer from data breaches \cite{QinW0018, JiangWHWLSR21} which would seriously harm data privacy.
These critical concerns necessitate the development of security mechanisms for deploying the skyline query service in the cloud, providing protections for the outsourced databases, skyline queries, and query results.

With traditional skyline queries, the user specifies a query point/tuple with same number of dimensions as database points/tuples, and retrieves a set of skyline points that are not dominated by any other point in the database \cite{liu2019secure}.
In some real-world scenarios, however, users may be only interested in certain dimensions (each dimension corresponds to an attribute) and prefer maximum or minimum value on each target dimension.
Besides, users may want to specify a constrained region to filter out database points that do not meet their expectations.
Driven by these practical requirements, user-defined skyline query \cite{zhang2022efficient} has recently emerged as an intriguing type of skyline query to accommodate tailored needs of users. It allows users to customize their target dimensions, corresponding preferences, and constrained regions.
%
%

Consider for example an application that allows users to choose a subset of hotels from a number of hotels around a tourist spot, based on skyline queries.
Each hotel is assumed to be associated with three attributes: \emph{price}, \emph{distance} (between the hotel and the tourist spot), and \emph{score} (which indicates the hotel's service quality).
In practice, some users may care more about the price and distance attributes, while others may care little about the price and are more concerned with the score and distance attributes.
Hence, different users may select different attributes in forming their skyline queries.
Meanwhile, for the same attribute, different users may have different preferences on its value.
For example, some users may prefer to stay as close as possible to the tourist spot, so they prefer minimal value on the distance attribute.
Other users may prefer not to stay near the tourist spot which is probably noisy, so they prefer maximum value on the distance attribute.
Moreover, for each target attribute, a user may want to specify a range constraint that its value should satisfy.
These range constraints form a constrained region that the retrieved skyline points should fall within.
For example, different users may have different budgets for the hotel payment, so it is natural for them to specify different range constraints on the price attribute.

In applications like the above kind, user-defined skyline query, which allows users to select attributes/dimensions, specify preferences, and define constrained regions, is much better suited for flexibly meeting tailored user needs.
So in this paper our focus is on exploring how to enable oblivious user-defined skyline query processing in the cloud.
%
Specifically, given an encrypted outsourced database and an encrypted user-defined skyline query, we aim to allow the cloud to obliviously fetch the encrypted skyline points/tuples.
Here, for securing the service, we not only aim to provide confidentiality protection for the content of the outsourced database, the user-defined skyline query, and the query results, but also ambitiously aim to make the cloud oblivious to the data patterns which may indirectly cause data leakages \cite{liu2019secure,ding2021efficient}.
%
Specifically, the data patterns include the dominance relationships among the database points, the number of database points that each skyline point dominates, as well as the search access patterns.
Here the search pattern reveals whether a new user-defined skyline query has been issued previously and the access pattern reveals which points in the database are output as the skyline points.

Skyline query with privacy preservation has received growing attentions in recent years.
Most existing works are focused on traditional skyline query \cite{liu2017secure,liu2019secure,ding2021efficient,zheng2022secskyline,wang2020stargazing}.
Little work \cite{liu2018pusc,zhang2022efficient} has been done on the privacy-preserving user-defined skyline query.
In the state-of-the-art prior work \cite{zhang2022efficient}, Zhang \emph{et al.} present the formal definition of user-defined skyline query and propose a scheme for privacy-preserving user-defined skyline query.
However, the practical usability of their scheme is hindered by prominent security downsides, as it leaks the aforementioned data patterns (i.e., the dominance relationships among the database points, the number of database points that each skyline point dominates, and the search access patterns) to the cloud.
Therefore, how to enable oblivious user-defined skyline query processing that provides protection for both data confidentiality and data patterns is still challenging and remains to be fully explored.

In light of the above, in this paper we propose {\main}, a new system framework enabling oblivious user-defined skyline query processing in the cloud.
{\main} builds on lightweight secret sharing technique \cite{demmler2015aby} for efficient data encryption, and allows the cloud to hold an encrypted database and obliviously process encrypted user-defined skyline query. 
%
In comparison with the state-of-the-art prior work \cite{zhang2022efficient}, {\main} works in the encrypted domain while safeguarding the aforementioned data patterns.
Our key insight is to let the cloud first obliviously shuffle the tuples in the encrypted original database and then work over the encrypted shuffled database to securely find encrypted skyline tuples. 
%
%
In this way, the dominance relationships among the (encrypted) tuples in the (encrypted) \emph{shuffled} database can be safely revealed to identify which tuples in the \emph{shuffled} database are skyline tuples, as the shuffle is performed in an oblivious manner and the permutation is unknown to the cloud.
In short, based on the key idea of oblivious shuffle, {\main} only reveals \emph{untrue} data patterns to the cloud while allowing it to correctly and obliviously locate encrypted skyline tuples.

The security guarantees of {\main} are formally analyzed. We implement {\main}'s protocols and conduct extensive experiments for performance evaluation.
The results show that while achieving much stronger security guarantees than the state-of-the-art prior work \cite{zhang2022efficient}, {\main} is also superior in database and query encryption efficiency. Additionally, as a natural trade-off for strong security guarantees, the query latency is higher than that in \cite{zhang2022efficient}, but is still practically affordable (only about 2.4 seconds on an encrypted database of 10000 5-dimensional tuples, with 3 user-selected dimensions and the percentage of database tuples satisfying user-defined constrained region being 1\%).
Our main contributions are summarized below:

\begin{itemize}
    \item We present {\main}, a new system framework enabling oblivious user-defined skyline query processing in the cloud and achieving much stronger security guarantees than the state-of-the-art prior work.

    \item We propose the key idea of obliviously shuffling the encrypted outsourced database and revealing untrue data patterns in subsequent processing for each encrypted user-defined skyline query. 


   \item We design an oblivious user-defined dominance evaluation protocol allowing the cloud to obliviously evaluate the user-defined dominance relationships between database tuples. Based on this protocol, we further develop a protocol for oblivious fetch of encrypted skyline tuples, while keeping the data patterns private.

    \item We formally analyze the security of {\main} and conduct comprehensive performance evaluations.
    The results show that while achieving much stronger security guarantees than the state-of-the-art prior work \cite{zhang2022efficient}, ObliuSky is also better in database and query encryption efficiency, with practically affordable query latency.

\end{itemize}

The rest of this paper is organized as follows.
Section \ref{sec:Related} discusses the related work.
In Section \ref{sec:Preliminaries}, we introduce the preliminaries.
Then, we introduce our system architecture, threat assumptions, and security guarantees in Section \ref{sec:System}.
After that, we present the design of {\main} in Section \ref{sec:Design}, followed by security analysis and experiments in Section \ref{sec:Security} and \ref{sec:Experiments}.
Finally, we conclude this paper in Section \ref{sec:Conclusion}.

\section{Related Work}

\label{sec:Related}

\noindent\textbf{Skyline query processing in plaintext domain.} B\"{o}rzs\"{o}nyi \textit{et al.} \cite{borzsonyi2001skyline} present the first work that studies skyline queries for database systems and builds on the block-nested-loop (BNL) algorithm for processing skyline queries.
After that, skyline query processing has received great attentions in the database field and a number of variants of skyline queries for different application scenarios have been put forward, such as uncertain skyline \cite{liu2015finding,pei2007probabilistic}, subspace skyline \cite{dellis2006constrained,pei2005catching}, and group-based skyline \cite{liu2015findinga,yu2017fast}.
However, this line of work addresses skyline query processing in the plaintext domain without considering privacy protection.

\noindent\textbf{Skyline query processing with privacy awareness.} The first study on skyline query processing with data privacy protection is initiated by Bothe \textit{et al.} \cite{bothe2014skyline}. However, as the first attempt, their proposed scheme lacks formal security guarantees.
%
Later, the works in \cite{liu2017secure,liu2019secure,ding2021efficient} build on additively homomorphic encryption to devise schemes that can provide formal security guarantees under the two-server model. Their schemes provide comprehensive data protections, safeguarding the confidentiality of the outsourced databases, queries, and query results as well as data patterns regarding the dominance relationships between database tuples, the number of database tuples dominated by each skyline tuple, and search access patterns.
Despite the strong security guarantees, the schemes in \cite{liu2017secure,liu2019secure,ding2021efficient} incur high performance overheads due to the use of heavy cryptosystem.
Subsequently, Zheng \textit{et al.} \cite{zheng2022secskyline} present SecSkyline which only makes use of lightweight secret sharing techniques and greatly outperforms the schemes in \cite{liu2017secure,liu2019secure,ding2021efficient} in online query latency.
Different from these cryptographic approaches, Wang \textit{et al.} \cite{wang2020stargazing} propose an alternative approach based on the use of trusted hardware, yet it is known that trusted hardware is subject to various side-channel attacks \cite{hahnel2017high,van2017telling,lee2017inferring,lee2020off}. 
In independent work \cite{wang2022efficient}, Wang \textit{et al.} focus on the support for result verification under specific location-based skyline queries where the skyline query only includes two spatial attributes.
%
%
%
Despite being valuable, all these existing works target application scenarios of skyline queries different from ours and cannot be used for supporting privacy-preserving \emph{user-defined} skyline queries.

The state-of-the-art prior work that is most related to our work is due to Zhang \textit{et al.} \cite{zhang2022efficient}.
They study privacy-aware user-defined skyline query, where user preference is integrated into constrained subspace skyline query \cite{dellis2006constrained}, allowing a user to specify its interested dimensions and define range constraints and dominance measurements for the selected dimensions.
%
%
Their scheme is designed under a single-server model.
Despite being a valuable research endeavor, their scheme provides weak security guarantees.
In particular, in their scheme the dominance relationships between database tuples are disclosed during the protocol execution at the cloud side. 
This further allows the cloud to learn the number of database tuples dominated by each skyline tuple, and the access pattern (which in turn would disclose the search pattern).
%
%
%
%
However, according to prior works on privacy-preserving skyline queries \cite{liu2019secure,ding2021efficient,zheng2022secskyline}, the above leakages should be prevented for strong privacy protection, because they may allow the cloud to indirectly infer private information regarding the databases and queries.
Indeed, hiding the data patterns is what makes the protocol design for privacy-preserving skyline query processing challenging.
In comparison with \cite{zhang2022efficient}, {\main} is free of the above leakages and offers strong security guarantees.

\section{Preliminaries}
\label{sec:Preliminaries}

\subsection{User-Defined Skyline Query}

We follow the state-of-the-art prior work \cite{zhang2022efficient} in the definition of user-defined skyline query.
At a high level, user-defined skyline query is an integration of user preferences with constrained subspace skyline query \cite{dellis2006constrained}.



\begin{definition}
	\label{def:1}
    (User-Defined Dominance). Given a database $\mathcal{T}=\{\mathbf{t}_1,\cdots,\mathbf{t}_n\}$, where each tuple of the database $\mathbf{t}_i$ ($i\in[1,n]$) is an $m$-dimensional vector and each dimension corresponds to an attribute.
    %
    Let $\mathcal{D}=\{1,2,\cdots, m\}$ denote the set of all dimensions and $\mathcal{B}=\{d_1, d_2, \cdots, d_k\}$ denote a user-selected subset of $\mathcal{D}$ with size $k$.
    A tuple $\mathbf{t}_a\in \mathcal{T}$ is said to dominate another tuple $\mathbf{t}_b \in \mathcal{T}$ on the selected dimension subset $\mathcal{B}$ as per user preference, denoted as $\mathbf{t}_a\prec_{\mathcal{B}}\mathbf{t}_b$, if (1) for every $d_j\in\mathcal{B}$, $\mathbf{t}_a[d_j]\leq\mathbf{t}_b[d_j]$ (if the user prefers minimal value on dimension $d_j$) or $\mathbf{t}_a[d_j]\geq\mathbf{t}_b[d_j]$ (if the user prefers maximum value on dimension $d_j$); and (2) for at least one $d_j \in\mathcal{B}$, $\mathbf{t}_a[d_j]\neq \mathbf{t}_b[d_j]$.
\end{definition}

\begin{definition}
    (User-Defined Constrained Region). A user-defined constrained region with respect to a user-selected dimension subset $\mathcal{B}\subseteq\mathcal{D}$ of size $k$ is represented by a set of range constraints $\mathcal{R}=\{(r_{j,l},r_{j,u})\}_{j=1}^k$, where the values $r_{j,l}$ and $r_{j,u}$ represent the user-specified minimum value and maximum value on dimension $d_j$ respectively.
\end{definition}

\begin{definition}
    \label{def:UDSkyline}
    (User-Defined Skyline Query). A user-defined skyline query $Q$ consists of a user-selected dimension subset $\mathcal{B}$, a user-defined constrained region $\mathcal{R}$, and a preference set $\mathcal{P}=\{p_1, p_2, \cdots, p_k\}$ over the dimensions in $\mathcal{B}$ (where $p_j=1$ (resp. $p_j=0$) indicates that the user prefers maximum (resp. minimum) value on dimension $d_j$. The query result is a set of tuples satisfying the range constraints in $\mathcal{R}$ and are not dominated by any other tuple on $\mathcal{B}$ as per the user's preferences.
    Specifically, let $\mathcal{C}=\{\mathbf{t}\in\mathcal{T}|\forall d_j \in\mathcal{B}, {r}_{j,l}\leq\mathbf{t}[d_j]\leq {r}_{j,u}\}$ denote a sub-database of $\mathcal{T}$ that satisfies the user-defined constrained region $\mathcal{R}$. The user-defined skyline query result can be formally denoted by $\mathcal{S}=\{\mathbf{p}\in\mathcal{C}|\nexists \mathbf{p'}\in\mathcal{C}: \mathbf{p'} \prec_{\mathcal{B}}\mathbf{p}\}$.

    
\end{definition}

\subsection{Skyline Computation Based on Block-Nested Loop}
\label{sec:BNL}
Following the state-of-the-art prior work \cite{zhang2022efficient}, we build on the block-nested loop (BNL) algorithm \cite{borzsonyi2001skyline} for skyline computation, which is a widely popular algorithm and supports any dimensionality without requiring indexing or sorting on data. 
The main idea of the BNL algorithm is to maintain a window of candidate skyline tuples (denoted as $\mathcal{X}$) and update $\mathcal{X}$ while scanning the input database.
Specifically, given a database $\mathcal{T}$, in the beginning the first tuple in $\mathcal{T}$ is directly inserted into $\mathcal{X}$.
Then for each subsequent tuple $\mathbf{t}\in\mathcal{T}$, there are three cases to handle:
\begin{itemize}
    \item $\exists\mathbf{x}\in\mathcal{X}$, if $\mathbf{x}\prec\mathbf{t}$, then $\mathbf{t}$ is discarded;
    \item $\exists\mathbf{x}\in\mathcal{X}$, if $\mathbf{t}\prec\mathbf{x}$, then $\mathbf{x}$ is removed from $\mathcal{X}$;
    \item $\mathbf{t}$ is incomparable with all tuples in $\mathcal{X}$, then $\mathbf{t}$ is inserted into $\mathcal{X}$.
\end{itemize}
Here $\prec$ indicates the dominance relationship (the definition of which could be customized in user-defined skyline query), e.g., $\mathbf{x}\prec\mathbf{t}$ means that $\mathbf{x}$ dominates $\mathbf{t}$.

\subsection{Additive Secret Sharing}
\label{sec:secretSharing}

Additive secret sharing (ASS) \cite{demmler2015aby} divides a private value $x$ in the ring $\mathbb{Z}_{2^l}$ into two shares $\langle x\rangle_{1}\in\mathbb{Z}_{2^l}$ and $ \langle x\rangle_{2}\in\mathbb{Z}_{2^l}$, where $x=\langle x\rangle_{1}+ \langle x\rangle_{2}$.
$\langle x\rangle_{1}\in\mathbb{Z}_{2^l}$ and $ \langle x\rangle_{2}\in\mathbb{Z}_{2^l}$ are then distributed to two parties $P_{1}$ and $P_{2}$, respectively.
If $l=1$,  the secret sharing is called \textit{binary sharing} (denoted as $\llbracket x\rrbracket^B$, and its secret shares are denoted as $\langle x\rangle_{1}^B$ and $ \langle x\rangle_{2}^B$), and otherwise \textit{arithmetic sharing} (denoted as $\llbracket x\rrbracket^A$, and its secret shares are denoted as $\langle x\rangle_{1}^A$ and $ \langle x\rangle_{2}^A$).

In the arithmetic ASS domain, the basic operations are as follows (all operations are with respect to the ring $\mathbb{Z}_{2^l},l\geq2$).
(1) To compute $\llbracket c\cdot x \rrbracket^A$ given a secret-shared value $\llbracket x \rrbracket^A$ and a public constant $c$, each party $P_{i},i\in\{1,2\}$ computes $c\cdot\langle x\rangle_{i}^A$ locally. 
(2) To compute $\llbracket x\pm y \rrbracket^A$ given two secret-shared values $\llbracket x \rrbracket^A$ and $\llbracket y \rrbracket^A$, each party $P_{i},i\in\{1,2\}$ computes $\langle x\rangle_{i}^A\pm\langle y\rangle_{i}^A$ locally. 
(3) To compute $\llb x\cdot y\rrb^A$ given two secret-shared values $\llbracket x \rrbracket^A$ and $\llbracket y \rrbracket^A$, $P_{\{1,2\}}$ need to prepare a secret-shared Beaver triple $(\llb w\rrb^A,\llb u\rrb^A,\llb v\rrb^A)$ offline, where $w= u\cdot v$  \cite{RiaziWTS0K18}.
Each $P_{i},i\in\{1,2\}$ first locally computes $\langle e\rangle^A_i=\langle x\rangle^A_i-\langle u\rangle^A_i$ and $\langle f\rangle^A_i=\langle y\rangle^A_i-\langle v\rangle^A_i$. Then, $e$ and $f$ are reconstructed and obtained by the two parties.
Finally, $P_1$ and $P_2$ locally compute the secret shares of $\llb x\cdot y\rrb^A$ by $\langle x\cdot y\rangle^A_1=e\cdot f+f\cdot\langle u\rangle^A_1+e\cdot\langle v\rangle^A_1+\langle w\rangle^A_1$
and $\langle x\cdot y\rangle^A_2=f\cdot\langle u\rangle^A_2+e\cdot\langle v\rangle^A_2+\langle w\rangle^A_2$,
respectively. 
For simplicity, we write $\llb z\rrb^A=\llb x\rrb^A\cdot\llb y\rrb^A$ to represent the secret-shared multiplication.

Note that in the binary ASS domain, addition (+) and subtraction (-) operations are replaced with XOR ($\oplus$) operations, while multiplication ($\cdot$) operations are replaced with AND ($\otimes$) operations.
Additionally, the NOT operation (denoted by $\neg$) in binary ASS domain can be performed by simply having one of $P_1$ and $P_2$ locally flip the secret share it holds. 
For example, to calculate $\llb \neg x\rrb^B$ given $\llb x\rrb^B$, we can simply have $P_1$ flip $\langle x\rangle^B_1$ to produce $\langle \neg x\rangle_{1}^B=\neg\langle x\rangle^B_1$, and the other share $ \langle \neg x\rangle_{2}^B= \langle x\rangle_{2}^B$.

\section{System Overview}
\label{sec:System}

\subsection{Architecture}

\begin{figure}[t!]
    \centering
    \includegraphics[width = \linewidth]{./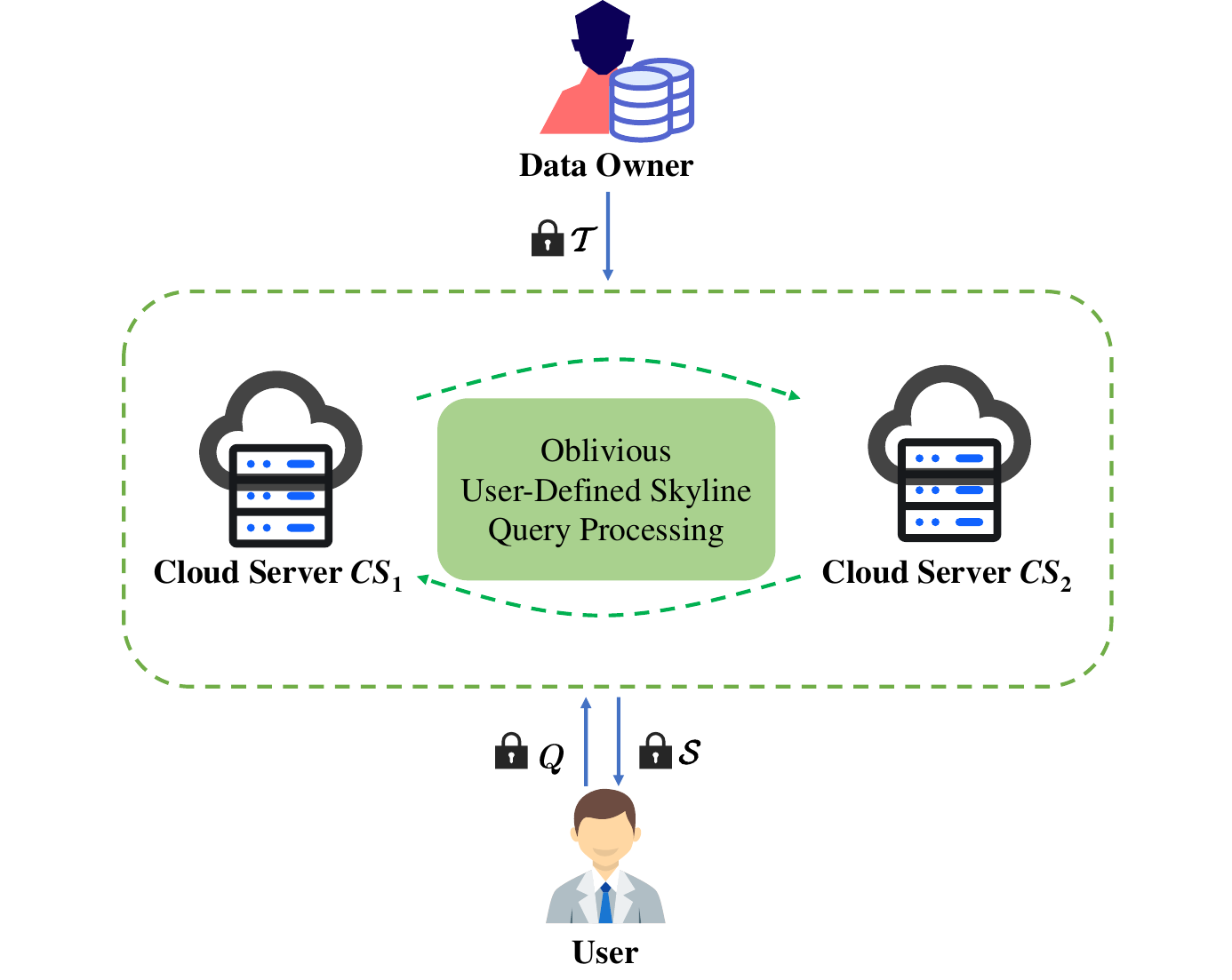}
    \caption{The architecture of {\main}.}
    \label{fig:Arch}
\end{figure}

The system architecture of {\main} is depicted in Fig. \ref{fig:Arch} which is comprised three kinds of entities: the data owner, the user, and the cloud.
The data owner is a trusted entity who owns a database $\mathcal{T}$ of multi-dimensional tuples and wants to provide skyline query service to the users via harnessing the power of cloud computing.
However, as the database is proprietary, the data owner is not willing to directly expose the database in cleartext to the cloud, and instead stores an encrypted version of the database in the cloud.
The cloud can then process user-define skyline queries from the user in a secure manner.
For a user-defined skyline query $Q$, the user needs to specify a dimension subset $\mathcal{B}$, a constrained region $\mathcal{R}$, and a preference set $\mathcal{P}$ over the selected dimension subset.
In {\main}, the user-defined skyline query will also be provided to the cloud in protected form, which then conducts secure search for skyline tuples in the ciphertext domain as per the protocol design of {\main}.
Once the secure search process is completed, the cloud returns a set $\mathcal{S}$ of encrypted skyline tuples to the user for decryption.

In pursuit of performance efficiency, {\main} builds on the lightweight cryptographic technique---additive secret sharing---to protect the outsourced database and the user-defined skyline query, as well as support subsequent secure computation to produce encrypted skyline tuples.
Accordingly, {\main} leverages a distributed trust architecture for the cloud entity, where the power of the cloud is split into two cloud servers (denoted by $CS_1$ and $CS_2$) hosted by independent cloud service providers.
Note that for convenience of presentation, we will sometimes simply refer to cloud servers $CS_1$ and $CS_2$ as {\csa} in this paper.  
The two cloud servers jointly empower the service of oblivious user-defined skyline query processing.
Such distributed trust architecture has seen growing adoption in various application domains (spanning both academia \cite{liu2019secure,ding2021efficient,zheng2022secskyline,meng2018top,chen2020metal,du2020graphshield,cui2020svknn} and industry \cite{Mozilla,WhitePaper})  and {\main} also follows such trend.

\subsection{Threat Assumptions and Security Guarantees}

In {\main}, similar to prior work \cite{zhang2022efficient}, we consider that the data owner and the user are trusted parties who will adhere to our protocol specifications.
Regarding the cloud servers, we assume a semi-honest adversary model, where each cloud server honestly follows our protocol in providing the user-defined skyline query service, yet may be curious to learn private information from the protocol execution.
Meanwhile, the two cloud servers do not collude with each other.
Such semi-honest and non-colluding threat assumptions on the two cloud servers have been widely adopted as well in other secure systems \cite{liu2019secure,ding2021efficient,zheng2022secskyline,du2020graphshield,chen2020metal}.

Under the above threat model, similar to most existing works in privacy-preserving skyline query processing \cite{liu2019secure,ding2021efficient,zheng2022secskyline}, {\main} aims to guarantee that an individual adversarial cloud server cannot learn 1) the contents of the outsourced database $\mathcal{T}$, user-defined skyline query $Q$, and query result $\mathcal{S}$, 2) the (user-defined) dominance relationships among the database tuples and the number of database tuples which each skyline tuple dominates, and 3) the search access patterns, where the search pattern indicates whether a new skyline query has been issued before and the access pattern reveals which tuples in the database are included in the query result $\mathcal{S}$.

\section{The Design of \main}
\label{sec:Design}

\subsection{Design Overview}
    \label{sec:overview}
    As aforementioned, {\main} builds on the lightweight additive secret sharing technique for encrypting the outsourced database and user-defined skyline query, as well as for securely supporting the required skyline computation.
    So in the beginning the data owner encrypts each tuple $\mathbf{t}$ in its database via arithmetic secret sharing and produces the secret-shared database, denoted by $\llb\mathcal{T}\rrb^A$.
    Then, the secret shares $\langle\mathcal{T}\rangle^A_1$ and $\langle\mathcal{T}\rangle^A_2$ are sent to $CS_1$ and $CS_2$, respectively.

    Next we consider how to properly represent and encrypt a user-defined skyline query $Q$.
    Recall that a user-defined skyline query $Q$ by definition consists of a user-selected dimension subset $\mathcal{B}=\{d_j\}_{j=1}^{k}$, a user-defined constrained region $\mathcal{R}=\{(r_{j,l},r_{j,u})\}_{j=1}^{k}$, and a preference set $\mathcal{P}$ over the selected dimensions in $\mathcal{B}$.
    In order to hide the user's target dimensions and their associated range constraints and preferences while still allowing the query to be processed correctly, {\main} conducts the following to produce an encrypted user-defined skyline query.
    Firstly, the user-defined constrained region $\mathcal{R}$ is extended to $\mathcal{R'}$ that covers all dimensions in $\mathcal{D}$, i.e., $\mathcal{R'}=\{(r'_{i,l},r'_{i,u})\}_{i=1}^m$. If $i=d_j \in \mathcal{B}$, $r'_{i,l}=r_{j,l}$ and $r'_{i,u}=r_{j,u}$; otherwise $r'_{i,l}$ and $r'_{i,u}$ are set to the public lower bound and upper bound respectively on dimension $i$.
    It is easy to see that the tuples falling within the region $\mathcal{R'}$ are identical to those satisfying $\mathcal{R}$.

    Secondly, the preference set $\mathcal{P}$ is extended to $\mathcal{P}'$ that covers all dimensions in $\mathcal{D}$, i.e., $\mathcal{P}'=\{\mathbf{p}'_{i}\}_{i=1}^{m}$, where $\mathbf{p}'_i$ \emph{as per our design} is a 2-bit vector: $\mathbf{p}'_i\in\{00,01,10\}$. 
    Specifically, if $i=d_j \in \mathcal{B}$ and the user prefers minimal value on the $i$-th dimension, then $\mathbf{p}'_i=00$; if $i=d_j \in \mathcal{B}$ and the user prefers maximal value on the $i$-th dimension, then $\mathbf{p}'_i=01$; if $i\neq d_j \in \mathcal{B}$, i.e., the $i$-th dimension is not concerned by the user, then $\mathbf{p}'_i=10$. 
    %
    %
    After the expansion, the user encrypts each $(r'_{i,l},r'_{i,u})\in\mathcal{R}'$  via arithmetic secret sharing, producing the secret sharing of the user-defined constrained region $ \llb\mathcal{R}'\rrb^A$.
   	Additionally, the user encrypts each $\mathbf{p}'_{i}\in\mathcal{P}'$ via binary secret sharing, producing the secret sharing of the preference set $ \llb\mathcal{P}'\rrb^B$.
    The user then sends the secret shares $(\langle \mathcal{R}'\rangle^A_1, \langle \mathcal{P}'\rangle^B_1)$ and
    $(\langle \mathcal{R}'\rangle^A_2, \langle \mathcal{P}'\rangle^B_2)$ to $CS_1$ and $CS_2$, respectively.
    So the encrypted user-defined skyline query is represented as $\llb Q'\rrb=(\llb\mathcal{R}'\rrb^A,\llb\mathcal{P}'\rrb^B)$.

   We then consider how {\csa} can obliviously process the encrypted user-defined skyline query $\llb Q'\rrb$ on the encrypted database $\llb\mathcal{T}\rrb^A$ to produce the encrypted result set $\llb\mathcal{S}\rrb$.
   From a high-level perspective, we build upon the BNL algorithm \cite{borzsonyi2001skyline} (as introduced in Section \ref{sec:BNL}), which follows the state-of-the-art prior work \cite{zhang2022efficient}. 
   However, we observe that directly applying the BNL algorithm in the ciphertext domain without delicate design, as was done in \cite{zhang2022efficient}, will expose private information associated with data patterns including the dominance relationships between database tuples, the number of database tuples dominated by each skyline tuple, and the access pattern (which in turn would disclose the search pattern). 
   
    \begin{algorithm}[t!]
    \caption{High-Level Protocol Overview of {\main}}
    \label{alg:overview}
    \begin{algorithmic}[1]
        \REQUIRE The encrypted user-defined skyline query $\llb Q'\rrb=(\llb\mathcal{R}'\rrb^A,\llb\mathcal{P}'\rrb^B)$ and encrypted original database $[\![{\mathcal{T}}]\!]^A$.
        \ENSURE The encrypted result set of skyline tuples $\llb\mathcal{S}\rrb$.
        
        \STATE $\llb\widetilde{\mathcal{T}}\rrb^A=\mathsf{ObliShuff}(\llb\mathcal{T}\rrb^A)$.
        
        \STATE $\llb\mathcal{C}\rrb^A=\mathsf{ObliGen}(\llb\widetilde{\mathcal{T}}\rrb^A,\llb\mathcal{R}'\rrb^A)$.
        \STATE $\llb\mathcal{S}\rrb=\mathsf{ObliFetch}(\llb\mathcal{C}\rrb^A,\llb\mathcal{P}'\rrb^B)$.
        \RETURN The encrypted result set of skyline tuples $\llb\mathcal{S}\rrb$.
    \end{algorithmic}
\end{algorithm}

   The key challenge here is how to allow the cloud servers to obliviously evaluate the user-defined dominance relationships among database tuples as per the encrypted user-defined skyline query, and proceed correctly without disclosing the dominance relationships. 
   To tackle this challenge, our main insight is to have {\csa} obliviously shuffle $\llb\mathcal{T}\rrb^A$ based on a random permutation unknown to them, prior to the processing of the encrypted user-defined skyline query.
   Subsequently, the encrypted user-defined skyline query is processed over the encrypted \emph{shuffled} database.
   In this case, the dominance relationships among tuples in the (encrypted) shuffled database $\widetilde{\mathcal{T}}$ can be safely revealed, as the permutation is unknown to the cloud servers and they cannot learn the true dominance relationships among tuples in the original database $\mathcal{T}$.
   As will be clear later, such oblivious database shuffling will also greatly facilitate the oblivious finding of database tuples that satisfy the user-defined constrained region.
   More specifically, it will allow the generation of a sub-database that only contains encrypted tuples falling within the user-defined constrained region, while ensuring that the cloud servers cannot learn which tuples in the original database are contained in this sub-database.

    Under such oblivious database shuffling-based framework, we construct the protocol for oblivious user-defined skyline query processing, which consists of three subroutines: (i) oblivious database shuffling, denoted as $\mathsf{ObliShuff}$, (ii) oblivious sub-database generation, denoted as $\mathsf{ObliGen}$, and (iii) oblivious user-defined skyline fetching, denoted as $\mathsf{ObliFetch}$.
    A high-level protocol overview is given in Algorithm \ref{alg:overview}.
    In particular, given the encrypted query $\llb Q'\rrb$ and encrypted database $\llb\mathcal{T}\rrb^A$, {\main} introduces $\mathsf{ObliShuff}$ to have {\csa} obliviously shuffle $\llb\mathcal{T}\rrb^A$ to produce the encrypted shuffled database $\llb\widetilde{\mathcal{T}}\rrb^A$.
    Then, {\main} introduces $\mathsf{ObliGen}$ to have {\csa} obliviously select the encrypted tuples from $\llb\widetilde{\mathcal{T}}\rrb^A$ that fall within the encrypted user-defined constrained region $\llb\mathcal{R}'\rrb^A$, and produce the encrypted sub-database $\llb\mathcal{C}\rrb^A$ of $\llb\widetilde{\mathcal{T}}\rrb^A$.
    Afterwards, {\main} introduces $\mathsf{ObliFetch}$ to have {\csa} obliviously perform the BNL algorithm on $\llb\mathcal{C}\rrb^A$ based the encrypted preference set $ \llb\mathcal{P}'\rrb^B$  to fetch the result set $\llb\mathcal{S}\rrb$.
    In what follows, we elaborate on the design of each subroutine.

\subsection{Oblivious Database Shuffling}
    \label{sec:shuff}

	We first introduce how the encrypted database $\llb\mathcal{T}\rrb^A$ is obliviously shuffled by {\csa} in {\main}. 
	The purpose of this subroutine is to obliviously shuffle the tuples in $\llb\mathcal{T}\rrb^A$ (treated as a matrix $\llb\mathbf{T}\rrb^A$ where each row corresponds to a tuple) through a secret permutation unknown to the cloud servers.
    Since {\main} works with secret-shared data, what we need is an oblivious shuffle protocol that can work in the secret sharing domain.
    We observe that the state-of-the-art oblivious shuffle protocol from \cite{eskandarian2022clarion} is a good fit, because it works with secret-shared inputs and outputs under a two-party setting.
	We adapt this protocol to realize the subroutine of oblivious database shuffling $\mathsf{ObliShuff}$ in {\main} as follows.
	$\mathsf{ObliShuff}$ starts with $CS_1$ holding $\langle\mathbf{T}\rangle^A_1$, $\mathbf{A}_1$, $\mathbf{B}$ and $\pi_1$, and $CS_2$ holding $\langle\mathbf{T}\rangle^A_2$, $\mathbf{A}_2$, $\pi_2$ and $\Delta$.
	Here, $\mathbf{A}_1$, $\mathbf{A}_2$ and $\mathbf{B}$ are random matrices of the same size as $\mathbf{T}$; $\pi_1$ and $\pi_2$ are random permutations over $\{1,2,\cdots,n\}$ (where $n$ is the number of tuples in $\mathbf{T}$); $\Delta$ is a matrix such that $\Delta=\pi_2(\pi_1(\mathbf{A}_2)+\mathbf{A}_1)-\mathbf{B}$, where the permutations are applied in a rowwise manner.
	Note that the input-independent shuffle correlation quantities (i.e., $\mathbf{A}_1$, $\mathbf{A}_2$, $\mathbf{B}$, $\pi_1$, $\pi_2$, and $\Delta$) can be prepared in advance and distributed offline by a third party, such as the data owner.

	$\mathsf{ObliShuff}$ then works as follows to produce the encrypted shuffled database $\llb\widetilde{\mathbf{T}}\rrb^A$ for $\llb\mathbf{T}\rrb^A$.
	%
	%
	Firstly, $CS_2$ masks its share $\langle\mathbf{T}\rangle^A_2$ with $\mathbf{A}_2$:
	$\mathbf{Z}_2=\langle\mathbf{T}\rangle^A_2 - \mathbf{A}_2$, and sends $\mathbf{Z}_2$ to $CS_1$.
	Upon receiving $\mathbf{Z}_2$, $CS_1$ constructs $\mathbf{Z}_1=\pi_1(\mathbf{Z}_2+\langle\mathbf{T}\rangle^A_1)-\mathbf{A}_1$, and sends $\mathbf{Z}_1$ to $CS_2$.
	Afterwards, $CS_2$ locally computes $\pi_2(\mathbf{Z}_1)+\Delta$.
	Finally, $CS_1$ and $CS_2$ respectively hold the shares of $\llb\widetilde{\mathbf{T}}\rrb^A$: $\langle\widetilde{\mathbf{T}}\rangle^A_1=\mathbf{B}$ and $\langle\widetilde{\mathbf{T}}\rangle^A_2=\pi_2(\mathbf{Z}_1)+\Delta$.
	The above procedure can correctly produce a shuffled version of $\llb\mathbf{T}\rrb^A$, since
	\begin{align*}
		\langle\widetilde{\mathbf{T}}\rangle^A_1+\langle\widetilde{\mathbf{T}}\rangle^A_2 &= \mathbf{B}+\pi_2(\mathbf{Z}_1)+\Delta \\
		&= \mathbf{B}+\pi_2(\pi_1(\mathbf{Z}_2+\langle\mathbf{T}\rangle^A_1)-\mathbf{A}_1)+\Delta \\
		&= \mathbf{B}+\pi_2(\pi_1(\mathbf{T}-\mathbf{A}_2)-\mathbf{A}_1)+\Delta \\
		&= \pi_2(\pi_1(\mathbf{T}-\mathbf{A}_2)-\mathbf{A}_1)+\pi_2(\pi_1(\mathbf{A}_2)+\mathbf{A}_1) \\
		&= \pi_2(\pi_1(\mathbf{T})).
	\end{align*}
	Note that each $CS_i$ only knows its local permutation $\pi_i$ ($i\in \{1,2\}$), and is oblivious to the joint permutation $\pi=\pi_2(\pi_1(\cdot))$ that is actually applied to the encrypted database $\llb\mathbf{T}\rrb^A$.
    %


	\label{sec:scan}
    \begin{algorithm}[t!]
        \caption{Oblivious Sub-Database Generation $\mathsf{ObliGen}$}
        \label{alg:scan}
        \begin{algorithmic}[1]
            \REQUIRE The encrypted shuffled database $\llb\widetilde{\mathcal{T}}\rrb^A$ and the encrypted user-defined constrained region $ \llb\mathcal{R}'\rrb^A$.
            
            \ENSURE The encrypted sub-database $\llbracket\mathcal{C}\rrbracket^A$.

            \STATE Initialize $\llb\mathcal{C}\rrb^A=\emptyset$.
            \FORALL{$\llb\widetilde{\mathbf{t}}_i\rrb^A\in\llb\widetilde{\mathcal{T}}\rrb^A$}
                \FORALL{$(\llb r'_{j,l}\rrb^A,\llb r'_{j,u}\rrb^A)\in\llb\mathcal{R}'\rrb^A,j\in[1,m]$}
                    \STATE $\llb\alpha\rrb^B=\mathsf{secLEQ}(\llb r'_{j,l}\rrb^A,\llb\widetilde{\mathbf{t}}_i[j]\rrb^A)$.
                    \STATE $\llb\beta\rrb^B=\mathsf{secLEQ}(\llb\widetilde{\mathbf{t}}_i[j]\rrb^A,\llb r'_{j,u}\rrb^A)$.
                    \STATE $\llb\delta_j\rrb^B=\llb\alpha\rrb^B\otimes\llb\beta\rrb^B$.
                \ENDFOR 
                \STATE $\llb\widehat{\delta}_i\rrb^B=\llb\delta_1\rrb^B\otimes\cdots\otimes\llb\delta_m\rrb^B$.
                \STATE Safely open $\widehat{\delta}_i$.\label{line:opend}
                \IF{$\widehat{\delta}_i=1$}
                    \STATE $\llb\mathcal{C}\rrb^A.append(\llb\widetilde{\mathbf{t}}_i\rrb^A)$.
                \ENDIF
            \ENDFOR
            \RETURN The encrypted sub-database $\llb\mathcal{C}\rrb^A$.
        \end{algorithmic}
    \end{algorithm}

\subsection{Oblivious Sub-Database Generation}

    After obtaining the encrypted shuffled database $\llb\widetilde{\mathcal{T}}\rrb^A$ (i.e., the set representation of $\llb\widetilde{\mathbf{T}}\rrb^A$), {\csa} then need to select the encrypted tuples from $\llb\widetilde{\mathcal{T}}\rrb^A$ that fall within the encrypted user-defined constrained region $\llb\mathcal{R}'\rrb^A$, and produce the encrypted sub-database $\llb\mathcal{C}\rrb^A$.
    We present the complete construction of oblivious sub-database generation $\mathsf{ObliGen}$ in Algorithm \ref{alg:scan}.
    Note that given an encrypted tuple $\llb\widetilde{\mathbf{t}}_i\rrb^A\in\llb\widetilde{\mathcal{T}}\rrb^A$, obliviously evaluating whether it falls within $\llb\mathcal{R}'\rrb^A$ requires {\csa} to securely evaluate $\llb r'_{j,l}\rrb^A\leq\llb\widetilde{\mathbf{t}}_i[j]\rrb^A\leq\llb r'_{j,u}\rrb^A,j\in[1,m]$, where $(\llb r'_{j,l}\rrb^A,\llb r'_{j,u}\rrb^A)\in\llb\mathcal{R}'\rrb^A$.
    This needs secure comparison to be conducted over secret-shared values, and we note that it can be achieved by using a secure most significant bit (MSB) extraction protocol $\mathsf{SecExt}(\cdot)$ \cite{liu2021medisc}.
    Specifically, given two secret-shared values $\llb a\rrb^A$ and $\llb b\rrb^A$, $\mathsf{SecExt}(\llb a\rrb^A,\llb b\rrb^A)$ can securely extract the encrypted MSB of $(a-b)$. 
    If $a<b$, $\mathsf{SecExt}(\llb a\rrb^A,\llb b\rrb^A)$ evaluates to $\llb 1\rrb^B$, and otherwise $\llb 0\rrb^B$.
    For more details on the construction $\mathsf{SecExt}(\cdot)$, we refer the readers to \cite{liu2021medisc}.
    So we can build on the protocol $\mathsf{SecExt}(\cdot)$ for secure comparison.
    However, as $\mathsf{SecExt}(\cdot)$ securely evaluates the operator ``$<$'' instead of the operator ``$\leq$'', we tailor $\mathsf{secLEQ}(\llb a\rrb^A,\llb b\rrb^A)=\neg\mathsf{SecExt}(\llb b\rrb^A,\llb a\rrb^A)$ to suit our context.
    It is easy to see that if $a\leq b$, $\mathsf{secLEQ}(\llb a\rrb^A,\llb b\rrb^A)=\llb 1\rrb^B$, and otherwise $\mathsf{secLEQ}(\llb a\rrb^A,\llb b\rrb^A)=\llb 0\rrb^B$.

    We then introduce the design intuition of $\mathsf{ObliGen}$, which relies on $\mathsf{secLEQ}(\cdot)$.
    Given an encrypted tuple $\llb\widetilde{\mathbf{t}}_i\rrb^A\in\llb\widetilde{\mathcal{T}}\rrb^A$, {\main} has {\csa} evaluate the following:
    \begin{align}\notag
    	    \llb\alpha\rrb^B&=\mathsf{secLEQ}(\llb r'_{j,l}\rrb^A,\llb\widetilde{\mathbf{t}}_i[j]\rrb^A),\\\notag
    	 \llb\beta\rrb^B&=\mathsf{secLEQ}(\llb\widetilde{\mathbf{t}}_i[j]\rrb^A,\llb r'_{j,u}\rrb^A),\\\notag
    	 \llb\delta_j\rrb^B&=\llb\alpha\rrb^B\otimes\llb\beta\rrb^B,j\in[1,m],
    \end{align}
    where $\delta_j=1$ indicates that $\widetilde{\mathbf{t}}_i[j]$ satisfies the range constraint $ [r'_{j,l}, r'_{j,u}]$, and $\delta_j=0$ indicates the opposite.
    After that, aggregation of $\llb\delta_j\rrb^B$, $j\in[1,m]$ is performed to produce $\llb\widehat{\delta}_i\rrb^B$ by:
    \begin{equation}\notag
    	\llb\widehat{\delta}_i\rrb^B=\llb\delta_1\rrb^B\otimes\llb\delta_2\rrb^B\cdots\otimes\llb\delta_m\rrb^B,
    \end{equation}
    where $\widehat{\delta}_i=1$ indicates that tuple $\widetilde{\mathbf{t}}_i$ falls within the constrained region $\mathcal{R}'$, and $\widehat{\delta}_i=0$ indicates not.
    After that, $\widehat{\delta}_i$ can be \emph{safely} opened by {\csa} to determine whether adding $\llb\widetilde{\mathbf{t}}_i\rrb^A$ into the encrypted sub-database $\llb\mathcal{C}\rrb^A$.
	Note that since $\mathsf{ObliShuff}$ obliviously breaks the mappings between tuples in the encrypted original database $\llb\mathcal{T}\rrb^A$ and tuples in the encrypted shuffled database $\llb\widetilde{\mathcal{T}}\rrb^A$, {\csa} cannot learn which tuples in $\llb\mathcal{T}\rrb^A$ are added into $\llb\mathcal{C}\rrb^A$.

\subsection{Oblivious User-Defined Skyline Fetching}
    \label{sec:fetch}

	So far we have introduced how {\csa} obliviously select the encrypted tuples falling within the encrypted constrained region $\llb\mathcal{R}'\rrb^A$ and generate the encrypted sub-database $\llb\mathcal{C}\rrb^A$. 
	Now we introduce $\mathsf{ObliFetch}$ to enable {\csa} to obliviously execute the BNL algorithm and fetch the encrypted result set of skyline tuples $\llb\mathcal{S}\rrb$ from $\llb\mathcal{C}\rrb^A$.
    In the context of user-defined skyline queries, we observe that the primary challenge in securely realizing the BNL algorithm is evaluating the encrypted user-defined dominance $\llb\Phi\rrb^B=\llb\mathbf{a}\preceq_{\mathcal{B}}\mathbf{b}\rrb^B$ (as per Definition \ref{def:1}), given two encrypted tuples $\llb\mathbf{a}\rrb^A$ and $\llb\mathbf{b}\rrb^A$, as well as the encrypted preference set $\llb\mathcal{P}'\rrb^B$. 
    If $\mathbf{a}\preceq_{\mathcal{B}}\mathbf{b}$ holds, $\Phi=1$; otherwise, $\Phi=0$.

         \begin{algorithm}[t!]
        \caption{Oblivious User-Defined Dominance Determination $\mathsf{ObliDom}$}
        \label{alg:dom}
        \begin{algorithmic}[1]
            \REQUIRE The encrypted tuples $\llb\mathbf{a}\rrb^A$ and $\llb\mathbf{b}\rrb^A$, and the encrypted preference set  $\llb\mathcal{P}'\rrb^B=\{\llb\textbf{p}_i'\rrb^B\}_{i=1}^{m}$.

            \ENSURE The encrypted user-defined dominance flag $\llb\Phi\rrb^B$.

            \FORALL{$\llb\textbf{p}_i'\rrb^B\in\llb\mathcal{P}'\rrb^B,i\in[1,m]$}
                \STATE $\llb\alpha_i\rrb^B=\mathsf{secLEQ}(\llb\mathbf{a}[i]\rrb^A,\llb\mathbf{b}[i]\rrb^A)$. //Evaluate the magnitude relation between $\mathbf{a}[i]$ and $\mathbf{b}[i]$.
                
                \STATE $\llb\alpha'_i\rrb^B=\mathsf{secLEQ}(\llb\mathbf{b}[i]\rrb^A,\llb\mathbf{a}[i]\rrb^A)$.
                
                \STATE $\llb\beta_i\rrb^B=\neg\llb\textbf{p}_i'[2]\rrb^B\otimes\llb\alpha_i\rrb^B$.
                
                \STATE $\llb\beta'_i\rrb^B=\llb\textbf{p}_i'[2]\rrb^B\otimes\llb\alpha'_i\rrb^B$.
                
                \STATE $\llb\phi_i\rrb^B=\llb\beta_i\rrb^B\oplus\llb\beta'_i\rrb^B$.  //Evaluate whether the magnitude relation between $\mathbf{a}[i]$ and $\mathbf{b}[i]$ is consistent with the user preference on the $i$-th dimension.
                
                \STATE $\llb\sigma_i\rrb^B=\llb\phi_i\rrb^B\vee\llb\textbf{p}_i'[1]\rrb^B$.
                
                \STATE $\llb\omega_i\rrb^B=\neg(\llb\alpha_i\rrb^B\otimes\llb\alpha_i'\rrb^B)\otimes\neg\llb\textbf{p}_i'[1]\rrb^B$. //Evaluate whether $\mathbf{a}[i]\neq\mathbf{b}[i]$ and whether the user is concerned with the $i$-th dimension.

            \ENDFOR
            
            //Aggregate the encrypted evaluation results for each dimension:
            
            \STATE $\llb\widehat{\sigma}\rrb^B=\llb\sigma_1\rrb^B\otimes\cdots\otimes\llb\sigma_m\rrb^B$.
            
           \STATE$\llb\widehat{\omega}\rrb^B=\llb\omega_1\rrb^B\vee\cdots\vee\llb\omega_m\rrb^B$.

            \STATE $\llb\Phi\rrb^B=\llb\widehat{\sigma}\rrb^B\otimes\llb\widehat{\omega}\rrb^B$.

            \RETURN The encrypted dominance flag  $\llb\Phi\rrb^B$. 
        \end{algorithmic}
    \end{algorithm}

    \noindent\textbf{Oblivious user-defined dominance evaluation.}
    In order to tackle this challenge, we propose an oblivious user-defined dominance evaluation protocol $\mathsf{ObliDom}$, as given in Algorithm \ref{alg:dom}, which computes $\llb\Phi\rrb^B=\mathsf{ObliDom}(\llb\mathbf{a}\rrb^A,\llb\mathbf{b}\rrb^A,\llb\mathcal{P}'\rrb^B)$.
    At a high level, $\mathsf{ObliDom}$ first securely evaluates the dominance relationships between each dimension of $\llb\mathbf{a}\rrb^A$ and $\llb\mathbf{b}\rrb^A$ based on the corresponding encrypted preference information $\llb\textbf{p}_i'\rrb^B,i\in[1,m]$ (i.e., lines 1-9 of Algorithm \ref{alg:dom}), and then securely aggregates the encrypted evaluation results to produce the final encrypted dominance flag $\llb\Phi\rrb^B$ (i.e., lines 10-12 of Algorithm \ref{alg:dom}).
    The underlying idea behind the design of $\mathsf{ObliDom}$ is introduced as follows.

    Given $\llb\mathbf{a}[i]\rrb^A$ and $\llb\mathbf{b}[i]\rrb^A, i\in[1,m]$, {\csa} first evaluate:
    \begin{align}\notag
    	\llb\alpha_i\rrb^B=\mathsf{secLEQ}(\llb\mathbf{a}[i]\rrb^A,\llb\mathbf{b}[i]\rrb^A),\\\notag
    	 \llb\alpha_i'\rrb^B=\mathsf{secLEQ}(\llb\mathbf{b}[i]\rrb^A,\llb\mathbf{a}[i]\rrb^A),
    \end{align} 
	where $\alpha_i=1$ indicates that $\mathbf{a}[i]\leq\mathbf{b}[i]$ and $\alpha'_i=1$ indicates that $\mathbf{b}[i]\leq\mathbf{a}[i]$. 
	Conversely, if $\alpha_i=0$ or $\alpha'_i=0$, it signifies the opposite.
    Afterwards, {\csa} evaluate:
    \begin{align}\notag
    	 \llb\beta_i\rrb^B&=\neg\llb\textbf{p}_i'[2]\rrb^B\otimes\llb\alpha_i\rrb^B,\\\notag
    	 \llb\beta'_i\rrb^B&=\llb\textbf{p}_i'[2]\rrb^B\otimes\llb\alpha'_i\rrb^B,\\\notag
    	 \llb\phi_i\rrb^B&=\llb\beta_i\rrb^B\oplus\llb\beta'_i\rrb^B,
    \end{align}
    where $\textbf{p}_i'[2]$ is the second bit of the preference vector $\textbf{p}_i'$.
    Recall that $\textbf{p}_i'[2]=0$ means that the user prefers minimal value on the $i$-th dimension; $\textbf{p}_i'[2]=1$ means that the user prefers maximal value on the $i$-th dimension.
    Here, $\phi_i=1$ indicates that the magnitude relation between $\mathbf{a}[i]$ and $\mathbf{b}[i]$ is consistent with the user preference on the $i$-th dimension.

    We analyze the correctness as follows.
    $\phi_i=1$ means that $\beta_i=1$ or $\beta'_i=1$.
    If $\beta_i=1, \beta'_i=0$, then $\textbf{p}_i'[2]=0$ and $\alpha_i=1$, namely the user prefers minimal value on the $i$-th dimension and $\mathbf{a}[i]\leq\mathbf{b}[i]$.
    If $\beta_i=0, \beta'_i=1$, then $\textbf{p}_i'[2]=1$ and $\alpha'_i=1$, namely the user prefers maximal value on the $i$-th dimension and $\mathbf{b}[i]\leq\mathbf{a}[i]$.

    However, the above evaluation does not involve the condition regarding whether the user indeed chooses the $i$-th dimension or not.
    Therefore, {\csa} further evaluate:
    \begin{equation}\notag
   \llb\sigma_i\rrb^B=\llb\phi_i\rrb^B\vee\llb\textbf{p}_i'[1]\rrb^B,
    \end{equation}
  	where $\textbf{p}_i'[1]$ is the first bit of the preference vector $\textbf{p}_i'$.
    Recall that $\textbf{p}_i'[1]=1$ means that the user does not actually choose the $i$-th dimension, and $\textbf{p}_i'[1]=0$ means the opposite. 
  	 $\vee$ denotes the ``OR'' operation.
  	 Given two secret-shared bits $\llb b_1\rrb^B$ and $\llb b_2\rrb^B$, the secure OR operation over them can be performed by $\llb b_1\vee b_2\rrb^B=\neg(\neg\llb b_1\rrb^B \otimes \neg\llb  b_b\rrb^B)$.
  	Clearly, if the user indeed does not choose the $i$-th dimension, then $\textbf{p}_i'[1]=1$ and $\sigma_i=1$ always holds.
  	Subsequently, {\csa} aggregate the results on all dimensions by:
  	\begin{equation}\notag
  		\llb\widehat{\sigma}\rrb^B=\llb\sigma_1\rrb^B\otimes\cdots\otimes\llb\sigma_m\rrb^B, 
  	\end{equation}
  	where $\widehat{\sigma}=1$ indicates that $\mathbf{a}\preceq_{\mathcal{B}}\mathbf{b}$ or $\mathbf{a}=\mathbf{b}$, and $\widehat{\sigma}=0$ indicates the opposite.
  	Therefore, {\csa} need to further obliviously evaluate whether $\mathbf{a}=\mathbf{b}$ holds to obtain the final $\llb\Phi\rrb^B$, i.e., the final encrypted dominance evaluation result.

  	Our main insight is to let {\csa} evaluate:
  	\begin{equation}\notag
  		\llb\omega_i\rrb^B=\neg(\llb\alpha_i\rrb^B\otimes\llb\alpha_i'\rrb^B)\otimes\neg\llb\textbf{p}_i'[1]\rrb^B,
  	\end{equation}
  	where $\omega_i=1$ indicates that $\mathbf{a}[i]\neq\mathbf{b}[i]$ and the user is concerned with the $i$-th dimension.
  	We analyze the correctness as follows.
  	Firstly, $\omega_i=1$ means that $\alpha_i\otimes\alpha_i'=0$, i.e., $\alpha_i=1$ and $\alpha_i'=1$ cannot hold at the same time. So $\mathbf{a}[i]\leq\mathbf{b}[i]$ and $\mathbf{b}[i]\leq\mathbf{a}[i]$ cannot hold at the same time, i.e., $\mathbf{a}[i]\neq\mathbf{b}[i]$.
  	Secondly, $\omega_i=1$ also means $\textbf{p}_i'[1]=0$, i.e., the user is concerned with the $i$-th dimension.
  	Afterwards, {\csa} aggregate $\llb\omega_i\rrb^B,i\in[1,m]$ by
  	\begin{equation}\notag
  		\llb\widehat{\omega}\rrb^B=\llb\omega_1\rrb^B\vee\cdots\vee\llb\omega_m\rrb^B,
  	\end{equation} 
  	where $\widehat{\omega}=1$ indicates that $\exists i\in[1,m],\mathbf{a}[i]\neq\mathbf{b}[i]$ and the user is concerned with the $i$-th dimension.
  	Finally, {\csa} aggregate $\llb\widehat{\sigma}\rrb^B$ and $\llb\widehat{\omega}\rrb^B$ to produce
  	\begin{equation}\notag
  	\llb\Phi\rrb^B=\llb\widehat{\sigma}\rrb^B\otimes\llb\widehat{\omega}\rrb^B,
  	\end{equation}
  	where $\Phi=1$ indicates that if the user prefers minimal value on the $i$-th dimension, then $\mathbf{a}[i]\leq\mathbf{b}[i]$; if the user prefers maximal value on the $i$-th dimension, then $\mathbf{b}[i]\leq\mathbf{a}[i]$, and there exists a user-selected dimension $j$ such that $\mathbf{a}[j]\neq\mathbf{b}[j]$.
  	Therefore, $\llb\Phi\rrb^B=\llb\mathbf{a}\preceq_{\mathcal{B}}\mathbf{b}\rrb^B$.

  \begin{algorithm}[t]
  	\caption{Oblivious User-Defined Skyline Fetching $\mathsf{ObliFetch}$}
  	\label{alg:fetch}
  	\begin{algorithmic}[1]
  		\REQUIRE The encrypted sub-database $\llbracket\mathcal{C}\rrbracket^A$ and the encrypted preference set  $\llb\mathcal{P}'\rrb^B$.

  		\ENSURE The encrypted result set $\llb\mathcal{S}\rrb$.


  		\STATE Initialize $\llb\mathcal{S}\rrb=\emptyset$; $\llb\mathsf{isDomi}\rrb^B=\llb0\rrb^B$.

  		\STATE $\llb\mathcal{S}\rrb.append((\llb\widetilde{\mathbf{t}}_1\rrb^A,\llb\mathsf{isDomi}\rrb^B))$. // $\llb\widetilde{\mathbf{t}}_1\rrb^A\in\llbracket\mathcal{C}\rrbracket^A$.

  		\FORALL{$\llb\widetilde{\mathbf{t}}_i\rrb^A\in\llbracket\mathcal{C}\rrbracket^A,i\geq2$}
  		
  		\STATE Initialize $\mathsf{flag}=1$; $\llb\widehat{\Phi}\rrb^B=\llb0\rrb^B$.
  		
  		\FORALL{$(\llb\widetilde{\mathbf{t}}\rrb^A,\llb\mathsf{isDomi}\rrb^B)\in\llb\mathcal{S}\rrb$}
  		
  		\STATE $\llb\Phi_1\rrb^B=\mathsf{ObliDom}(\llb\widetilde{\mathbf{t}}\rrb^A,\llb\widetilde{\mathbf{t}}_i\rrb^A,\llb\mathcal{P}'\rrb^B)$.
  		
  		\STATE $\llb\widehat{\Phi}\rrb^B=\llb\widehat{\Phi}\rrb^B\vee\llb\Phi_1\rrb^B$.

  		\STATE Locally generate a secret-shared random bit $\llb r\rrb^B$. 
  		
  		\STATE $\llb\Phi_1'\rrb^B= \llb\Phi_1\rrb^B\otimes\llb r\rrb^B$. 
  		
  		\STATE Safely open $\Phi'_1$.\label{line:fetchOpen1} 
  		
  		\IF{$\Phi'_1=1$} \label{line:discard1} 
  		\STATE $\mathsf{flag}=0$.
  		
  		\STATE Break. //$\widetilde{\mathbf{t}}$ dominates $\widetilde{\mathbf{t}}_i$. 
  		
  		\ENDIF \label{line:discard2} 
  		
  		\STATE $\llb\Phi_2\rrb^B=\mathsf{ObliDom}(\llb\widetilde{\mathbf{t}}_i\rrb^A,\llb\widetilde{\mathbf{t}}\rrb^A,\llb\mathcal{P}'\rrb^B)$.
  		\STATE Safely open $\Phi_2$. \label{line:fetchOpen2}
  		
  		\IF{$\Phi_2=1$}
  		\STATE $\llb\mathcal{S}\rrb.remove((\llb\widetilde{\mathbf{t}}\rrb^A,\llb\mathsf{isDomi}\rrb^B))$.  //$\widetilde{\mathbf{t}}_i$ dominates $\widetilde{\mathbf{t}}$. 
  		\ENDIF
  		\ENDFOR
  		\IF{$\mathsf{flag}=1$}

  		\STATE     $\llb\mathsf{isDomi}\rrb^B=\llb\widehat{\Phi}\rrb^B$.
  		\STATE $\llb\mathcal{S}\rrb.insert((\llb\widetilde{\mathbf{t}}_i\rrb^A,\llb\mathsf{isDomi}\rrb^B))$. // $\widetilde{\mathbf{t}}_i$ is incomparable with all tuples in $\llb\mathcal{S}\rrb$ or dominated by tuples in $\llb\mathcal{S}\rrb$.
  		\ENDIF
  		\ENDFOR
  		\RETURN The encrypted result set $\llb\mathcal{S}\rrb$.
  	\end{algorithmic}
  \end{algorithm}

	\noindent\textbf{The complete construction of $\mathsf{ObliFetch}$.}
	Algorithm \ref{alg:fetch} presents the complete construction of $\mathsf{ObliFetch}$, which is an oblivious realization of the BNL algorithm.
    %
    %
    At a high level, the encrypted result set $\llb\mathcal{S}\rrb$ is initialized as an empty set and is used to store the encrypted candidate skyline tuples during the execution of $\mathsf{ObliFetch}$.
    For each encrypted tuple $\llb\widetilde{\mathbf{t}}_i\rrb^A\in\llb\mathcal{C}\rrb^A$, $\mathsf{ObliDom}$ is invoked to securely determine the  dominance relationships between $\llb\widetilde{\mathbf{t}}_i\rrb^A$ and each encrypted candidate tuple $\llb\widetilde{\mathbf{t}}\rrb^A$ in $\llb\mathcal{S}\rrb$.
    Subsequently, {\csa} update $\llb\mathcal{S}\rrb$ by securely performing the corresponding operation introduced in Section \ref{sec:BNL} based on the dominance relationships.
    Specifically, if there exists $\llb\widetilde{\mathbf{t}}\rrb^A$ in $\llb\mathcal{S}\rrb$ such that $\llb\widetilde{\mathbf{t}}\rrb^A\preceq_{\mathcal{B}}\llb\widetilde{\mathbf{t}}_i\rrb^A$, then $\llb\widetilde{\mathbf{t}}_i\rrb^A$ is discarded. 
    If there exists $\llb\widetilde{\mathbf{t}}\rrb^A$ in $\llb\mathcal{S}\rrb$ such that $\llb\widetilde{\mathbf{t}}_i\rrb^A\preceq_{\mathcal{B}}\llb\widetilde{\mathbf{t}}\rrb^A$, then $\llb\widetilde{\mathbf{t}}\rrb^A$ is removed from $\llb\mathcal{S}\rrb$. 
    If $\llb\widetilde{\mathbf{t}}_i\rrb^A$ is incomparable with all encrypted tuples in $\llb\mathcal{S}\rrb$, then $\llb\widetilde{\mathbf{t}}_i\rrb^A$ is inserted into $\llb\mathcal{S}\rrb$.
    After iterating over all the tuples in $\llb\mathcal{C}\rrb^A$, {\csa} can obtain the final encrypted result set $\llb\mathcal{S}\rrb$, which is returned to the user for decryption.
    The underlying idea behind the design of $\mathsf{ObliFetch}$ is introduced as follows.

    Given each $\llb\widetilde{\mathbf{t}}_i\rrb^A\in\llb\mathcal{C}\rrb^A$, {\csa} first initialize $\mathsf{flag}=1$ (i.e., line 4),  which is used to indicate whether $\widetilde{\mathbf{t}}_i$ is incomparable with all tuples in the current $\llb\mathcal{S}\rrb$ .
  	Then given each $\llb\widetilde{\mathbf{t}}\rrb^A\in\llb\mathcal{S}\rrb$, {\main} lets {\csa} securely evaluate (i.e., line 6): 
    \begin{equation}\notag
    \llb\Phi_1\rrb^B=\mathsf{ObliDom}(\llb\widetilde{\mathbf{t}}\rrb^A,\llb\widetilde{\mathbf{t}}_i\rrb^A,\llb\mathcal{P}'\rrb^B),
    \end{equation}
	where $\Phi_1=1$ indicates that $\widetilde{\mathbf{t}}$ dominates $\widetilde{\mathbf{t}}_i$, and $\Phi_1=0$ indicates that $\widetilde{\mathbf{t}}$ cannot dominate $\widetilde{\mathbf{t}}_i$.

    After producing the encrypted dominance evaluation result $\llb\Phi_1\rrb^B$, one might think of letting {\csa} open $\Phi_1$ to determine whether to discard $\llb\widetilde{\mathbf{t}}_i\rrb^A$.
    While revealing such dominance evaluation result would not reveal the true dominance relationships among database tuples due to oblivious database shuffling, it has the potential to leak the number of database tuples dominated by each skyline tuple.
    Instead, {\main} lets {\csa} open a \textit{masked version} of $\Phi_1$ (i.e., lines 8-10) to \textit{randomly} discard $\llb\widetilde{\mathbf{t}}_i\rrb^A$.
    Specifically, {\csa} first generate a secret-shared random bit $\llb r\rrb^B$ locally, where $CS_1$ generates a random share $\langle r\rangle^B_1$ and $CS_2$ generates a random share $\langle r\rangle^B_2$.
    After that, {\csa} safely open $\Phi_1'$ instead of $\Phi_1$, where $\llb\Phi_1'\rrb^B= \llb\Phi_1\rrb^B\otimes\llb r\rrb^B$ and $\Phi_1'=1$ indicates that $\widetilde{\mathbf{t}}$ dominates $\widetilde{\mathbf{t}}_i$, and thus {\csa} discard $\llb\widetilde{\mathbf{t}}_i\rrb^A$ by setting $\mathsf{flag}=0$, and then break the current loop to evaluate the next encrypted tuple in $\llbracket\mathcal{C}\rrbracket^A$ (i.e., lines 11-14); $\Phi_1'=0$ indicates that $\widetilde{\mathbf{t}}$ \textit{possibly} does not dominate $\widetilde{\mathbf{t}}_i$, and thus {\csa} do not discard $\llb\widetilde{\mathbf{t}}_i\rrb^A$.
    Therefore, {\csa} do not necessarily discard a dominated tuple, and thus cannot learn the exact number of database tuples which each skyline tuple dominates. 
    Meanwhile, if $\Phi_1=0$, then $\Phi_1'=0$ must hold, which prevents {\csa} from falsely discarding a skyline tuple.

    Afterwards, if $\Phi_1'=0$, {\csa} continue to evaluate whether $\widetilde{\mathbf{t}}_i$ dominates $\widetilde{\mathbf{t}}$ by performing (i.e., line 15) 
       \begin{equation}\notag
    	\llb\Phi_2\rrb^B=\mathsf{ObliDom}(\llb\widetilde{\mathbf{t}}_i\rrb^A,\llb\widetilde{\mathbf{t}}\rrb^A,\llb\mathcal{P}'\rrb^B).
    \end{equation}
    {\csa} then safely open $\Phi_2$, where $\Phi_2=1$ indicates that $\widetilde{\mathbf{t}}_i$ dominates $\widetilde{\mathbf{t}}$, and thus {\csa} remove  $\widetilde{\mathbf{t}}$ from $\llb\mathcal{S}\rrb$.
    After iterating over all the tuples in the current $\llb\mathcal{S}\rrb$ (i.e., lines 5-20), if $\mathsf{flag}=1$ still holds, it indicates that $\widetilde{\mathbf{t}}_i$ is incomparable with all tuples in $\llb\mathcal{S}\rrb$, and therefore {\csa} insert the tuple $\llb\widetilde{\mathbf{t}}_i\rrb^A$ into $\llb\mathcal{S}\rrb$ (i.e., lines 21-24).

    The above process, however, will result in that $\llb\mathcal{S}\rrb$ contains encrypted tuples that are dominated by other tuple(s), as what {\csa} open is a masked version of $\Phi_1$.
    Therefore, {\main} introduces an encrypted bit $\llb\mathsf{isDomi}\rrb^B$ for each tuple inserted into $\llb\mathcal{S}\rrb$. The purpose of introducing this bit $\mathsf{isDomi}$ is to allow the user to later filter out fake skyline tuples from $\mathcal{S}$.
    Every time {\csa} insert a tuple $\llb\widetilde{\mathbf{t}}_i\rrb^A$ into $\llb\mathcal{S}\rrb$, {\main} lets them additionally insert $\llb\mathsf{isDomi}\rrb^B$ corresponding to the tuple into $\llb\mathcal{S}\rrb$, i.e., $\llb\mathcal{S}\rrb.insert(\llb\widetilde{\mathbf{t}}_i\rrb^A,\llb\mathsf{isDomi}\rrb^B)$ at line 23, where $\mathsf{isDomi}=1$ indicates that $\widetilde{\mathbf{t}}_i$ is dominated by at least one tuple in the current $\llb\mathcal{S}\rrb$ before it is inserted; whereas $\mathsf{isDomi}=0$  indicates that $\widetilde{\mathbf{t}}_i$ cannot be dominated by any tuple in the current $\llb\mathcal{S}\rrb$ before it is inserted.
    By this way, when the user receives the final result set $\llb\mathcal{S}\rrb$, it can decrypt it to easily filter out the fake skyline tuples from $\mathcal{S}$ by checking their flags $\mathsf{isDomi}$. 
    Specifically, $\mathsf{isDomi}=1$ indicates that the corresponding skyline tuple is fake, and the user filters it out from $\mathcal{S}$; whereas $\mathsf{isDomi}=0$ indicates that the corresponding skyline tuple is true, and thus the user retains it.

    Next, we introduce how {\csa} securely compute the encrypted bit $\llb\mathsf{isDomi}\rrb^B$ for each tuple inserted into $\llb\mathcal{S}\rrb$.
    Given an encrypted tuple $\llb\widetilde{\mathbf{t}}_i\rrb^A\in\llbracket\mathcal{C}\rrbracket^A$, {\csa} first initialize $\llb\widehat{\Phi}\rrb^B=\llb0\rrb^B$ (i.e., line 1).
    Then after each evaluation of $\llb\Phi_1\rrb^B=\mathsf{ObliDom}(\llb\widetilde{\mathbf{t}}\rrb^A,\llb\widetilde{\mathbf{t}}_i\rrb^A,\llb\mathcal{P}'\rrb^B)$ at line 6, {\csa} securely aggregate $\llb\Phi_1\rrb^B$ and $\llb\widehat{\Phi}\rrb^B$ by (i.e., line 7) 
    \begin{equation}\notag
    	\llb\widehat{\Phi}\rrb^B=\llb\widehat{\Phi}\rrb^B\vee\llb\Phi_1\rrb^B.
    \end{equation}
	When iterating over all the tuples in the current $\llb\mathcal{S}\rrb$ and inserting $\llb\widetilde{\mathbf{t}}_i\rrb^A$ into $\llb\mathcal{S}\rrb$, {\csa} set $\llb\mathsf{isDomi}\rrb^B=\llb\widehat{\Phi}\rrb^B$ and insert $(\llb\widetilde{\mathbf{t}}_i\rrb^A,\llb\mathsf{isDomi}\rrb^B)$ into $\llb\mathcal{S}\rrb$ (i.e., lines 22 and 23), where $\mathsf{isDomi}=1$ means that there exists $\llb\widetilde{\mathbf{t}}\rrb^A$ in $\llb\mathcal{S}\rrb$, such that the evaluation of $\mathsf{ObliDom}(\llb\widetilde{\mathbf{t}}\rrb^A,\llb\widetilde{\mathbf{t}}_i\rrb^A,\llb\mathcal{P}'\rrb^B)$ outputs $\llb\Phi_1\rrb^B=\llb1\rrb^B$, namely, $\llb\widetilde{\mathbf{t}}_i\rrb^A$ is dominated by at least one tuple in the sub-database $\llbracket\mathcal{C}\rrbracket^A$.
    Therefore, each fake skyline tuple in the result set $\mathcal{S}$ can be correctly flagged by its $\mathsf{isDomi}$, i.e., $\mathsf{isDomi}$ corresponding to the fake skyline tuple is 1.

    \section{Security Analysis}
    \label{sec:Security}

    Our security analysis of {\main} follows the standard simulation-based paradigm \cite{lindell2017how}.
    %
    %
	To start, we define the ideal functionality $\mathcal{F}$ for oblivious user-defined skyline query processing.
    \begin{itemize}
        \item \textbf{Input.} The data owner provides the database
        $\mathcal{T}$ to $\mathcal{F}$; the user submits a user-defined skyline query  $Q$ to $\mathcal{F}$; and {\csa} input nothing to $\mathcal{F}$.

        \item \textbf{Computation.} Upon receiving $\mathcal{T}$ and $Q$,
        $\mathcal{F}$ performs the user-defined skyline query processing, and produces the skyline tuples $\mathcal{S}$.
    
        \item \textbf{Output.} $\mathcal{F}$ returns $\mathcal{S}$ to the user.
    \end{itemize}
    Let $\prod$ denote a protocol for oblivious user-defined skyline query processing realizing the ideal functionality $\mathcal{F}$.
    The security of $\prod$ is formally defined as follows:
    
    \begin{definition}
        \label{def:secure}
        Let $\mathsf{View}^{\prod}_{CS_t},t\in\{1,2\}$ denote $CS_t$'s view
        during $\prod$'s execution.
        $\prod$ is secure under the semi-honest and non-colluding adversary model, if for each corrupted $CS_t$, there exists a probabilistic polynomial-time (PPT) simulator $S$ that can generate a simulated view $\textsf{Sim}_{CS_t}$ such that $\textsf{Sim}_{CS_t}$ is indistinguishable from $\mathsf{View}^{\prod}_{CS_t}$.
    \end{definition}

    \begin{theorem}
        \label{theo:secure}
        In the semi-honest and non-colluding adversary model, {\main} can securely realize the ideal functionality $\mathcal{F}$ according to Definition \ref{def:secure}.
    \end{theorem}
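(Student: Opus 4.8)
The plan is to follow the standard simulation-based paradigm \cite{lindell2017how}: for each corrupted server $CS_t$, $t\in\{1,2\}$, I would construct a PPT simulator $S$ that, given only $CS_t$'s own inputs (its random shares and input-independent correlated randomness) together with the public parameters (ring size, tuple count $n$, dimensionality $m$), produces a simulated transcript $\mathsf{Sim}_{CS_t}$ indistinguishable from the real view $\mathsf{View}^{\prod}_{CS_t}$. Since {\main} is a sequential composition of $\mathsf{ObliShuff}$, $\mathsf{ObliGen}$, and $\mathsf{ObliFetch}$ (the last built on $\mathsf{ObliDom}$), each of which is in turn assembled from the building blocks additive secret sharing \cite{demmler2015aby}, the oblivious shuffle of \cite{eskandarian2022clarion}, and the secure comparison $\mathsf{SecExt}$ of \cite{liu2021medisc}, I would invoke the modular composition theorem, argue security subroutine by subroutine, and stitch the sub-simulators together by a hybrid argument.

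The first ingredient is simulating the \emph{message transcript}. The key observation is that every value $CS_t$ receives is either a fresh additive share or a value masked by input-independent randomness, hence marginally uniform over the ring and simulatable by uniform sampling. Concretely, the initial shares $\langle\mathcal{T}\rangle^A_t$, $\langle\mathcal{R}'\rangle^A_t$, and $\langle\mathcal{P}'\rangle^B_t$ are uniform, so $S$ samples them at random. In $\mathsf{ObliShuff}$ the single message $CS_t$ receives ($\mathbf{Z}_2$ for $CS_1$, or $\mathbf{Z}_1$ for $CS_2$) is masked by the random matrix $\mathbf{A}_2$ (resp. $\mathbf{A}_1$), so $S$ replaces it with a uniform matrix, indistinguishability reducing to the proven security of \cite{eskandarian2022clarion}. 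Inside $\mathsf{ObliGen}$ and $\mathsf{ObliDom}$, the intermediate outputs of $\mathsf{SecExt}$, of the Beaver-triple multiplications, and of the binary $\otimes/\vee/\neg$ gates are themselves shares, so I would invoke the simulators of these sub-protocols to generate random-looking shares for $CS_t$.

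The crux is the \emph{opened values}, the only places where plaintext bits enter $CS_t$'s view: the safe openings of $\widehat{\delta}_i$ (line 9 of Algorithm \ref{alg:scan}), of the masked flag $\Phi'_1$ (line 10 of Algorithm \ref{alg:fetch}), and of $\Phi_2$ (line 16). I would argue these expose only \emph{untrue} patterns. Because $\mathsf{ObliShuff}$ applies the joint permutation $\pi=\pi_2\circ\pi_1$ that neither server knows (each $CS_t$ sees only its own $\pi_t$), the sequence of opened region-membership bits $\widehat{\delta}_i$ and the sequence of opened dominance bits $\Phi_2$ correspond to a uniformly random permutation of the underlying true pattern, independent of the identities of the original tuples, so the search and access patterns over $\mathcal{T}$ remain hidden. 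For $\Phi'_1=\Phi_1\otimes r$ with a fresh secret-shared random bit $r$, the opened value is $0$ whenever $\Phi_1=0$ and an unbiased coin flip whenever $\Phi_1=1$; hence $S$ samples it from the corresponding input-independent distribution, which additionally hides the number of tuples each skyline tuple dominates. The simulator thus draws the entire opened-value trace from these randomized distributions, consistent with the shuffled ordering it has itself fixed.

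The hard part will be rigorously establishing that the \emph{joint} distribution of all opened bits in the real execution is statistically close to what $S$ samples without the private inputs, i.e. that, conditioned on the public sizes, the opened region-membership and shuffled-dominance traces depend only on the random permutation and the random masking bits, not on $\mathcal{T}$ or $Q$. I would discharge this through a sequence of hybrids: start from the real view; replace the $\mathsf{ObliShuff}$ message with uniform randomness (indistinguishable by \cite{eskandarian2022clarion}); replace every $\mathsf{SecExt}$, multiplication, and gate transcript with its simulator's output; and finally replace each opened bit, arguing that after the now-idealized random shuffle the opened traces are distributed independently of the true inputs and that each $\Phi'_1$ is an independent masked coin. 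Indistinguishability of adjacent hybrids follows from the security of the corresponding building block or from the uniformity of a fresh share or mask, and transitivity yields $\mathsf{Sim}_{CS_t}\approx\mathsf{View}^{\prod}_{CS_t}$ for both $t\in\{1,2\}$, establishing Theorem \ref{theo:secure}.
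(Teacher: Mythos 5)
Your overall strategy---sequential modular composition over $\mathsf{ObliShuff}$, $\mathsf{ObliGen}$, and $\mathsf{ObliFetch}$, invoking the building-block simulators for the oblivious shuffle \cite{eskandarian2022clarion} and the $\mathsf{SecExt}$-based comparisons, and identifying the safely opened bits as the crux---is exactly the paper's, and your explicit hybrid organization is a reasonable sharpening of its appeal to composition results. However, there is a concrete gap in your simulator specification: you give $S$ only $CS_t$'s own shares plus the public parameters (ring size, $n$, $m$), and then claim that, conditioned on ``public sizes,'' the opened traces ``depend only on the random permutation and the random masking bits, not on $\mathcal{T}$ or $Q$.'' That claim is false as stated. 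The Hamming weight of the opened $\widehat{\delta}_i$ sequence in Algorithm \ref{alg:scan} is exactly $|\mathcal{C}|$, the number of tuples satisfying the user-defined constrained region, which depends on both $\mathcal{T}$ and $Q$; likewise, the number of ones in the opened $\Phi_1'$ trace in Algorithm \ref{alg:fetch} equals the number $n_d$ of discard operations, and the lengths of the $\Phi_1'$ and $\Phi_2$ traces are determined by how the BNL window evolves---all data-dependent quantities that no simulator can reproduce from $(n,m)$ alone. The paper resolves this by making these quantities explicit leakage: its simulator ``legitimately holds'' $|\mathcal{C}|$, $n_d$, and the trace lengths, and then simulates each trace by generating a zero sequence of the correct length and placing the known number of ones at uniformly random positions---the uniformity of the positions being precisely what the secret joint permutation $\pi_2(\pi_1(\cdot))$ buys. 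Without declaring this leakage profile up front, your final hybrid step cannot go through.

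Relatedly, your per-bit treatment of $\Phi_1'$---``$S$ samples it from the corresponding input-independent distribution''---is not an operation $S$ can perform: the per-bit distribution is a point mass at $0$ when $\Phi_1=0$ and Bernoulli$(1/2)$ when $\Phi_1=1$, so it depends on the secret dominance bit. Your own observation that $\Phi_1=0$ forces $\Phi_1'=0$ is correct (and is why correctness is preserved), but it does not yield a simulation strategy. The repair is the paper's count-based argument just described: hand $S$ the aggregate count $n_d$ and randomize positions, rather than attempting per-bit sampling. Once you (i) restate the security claim relative to the leakage $\{|\mathcal{C}|, n_d, \text{trace lengths}\}$ granted to the simulator and (ii) replace per-bit sampling with random placement of the leaked number of ones, your hybrid sequence aligns with the paper's proof.
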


    \begin{proof}
        Recall that there are three secure subroutines in {\main}: 1) oblivious database shuffling $\mathsf{ObliShuff}$; 2) oblivious sub-database generation $\mathsf{ObliGen}$; 3) oblivious user-defined skyline fetching $\mathsf{ObliFetch}$.
        {\main} invokes these secure subroutine in order, each of which takes as input secret-shared inputs and outputs.
        %
        Hence, if the simulator for each component exists, the simulator for the whole protocol exists \cite{canetti2000security,katz2005handling,curran2019procsa}.
        Let $\mathsf{Sim}^{\mathtt{X}}_{CS_{t}}$ denote the simulator that generates $CS_{t}$'s view in the execution of secure subroutine $\mathtt{X}$. The construction of the simulator for each secure subroutine is shown as follows.

        \begin{itemize}
            \item $\mathsf{Sim}^{\mathsf{ObliShuff}}_{CS_t}$.
            $\mathsf{Sim}^{\mathsf{ObliShuff}}_{CS_t}$ can be trivially constructed by invoking the simulator for the oblivious shuffle protocol \cite{eskandarian2022clarion}.
            Therefore, according to the security of oblivious shuffle \cite{eskandarian2022clarion}, $\mathsf{Sim}^{\mathsf{ObliShuff}}_{CS_t}$ exists.

            \item $\mathsf{Sim}^{\mathsf{ObliGen}}_{CS_t}$.
            Note that the operations in the secure component $\mathsf{ObliGen}$ (i.e., Algorithm \ref{alg:scan})  that require interaction between {\csa} include the $\mathsf{secLEQ}$ operation, basic secure operations in binary secret sharing domain, and the operations of opening secret-shared values.
            We then analyze the simulators of them in turn. 
            The $\mathsf{secLEQ}$ operation is built on the secure MSB extraction protocol \cite{liu2021medisc}.
            Since secure MSB extraction involves basic binary secret sharing operations (i.e., $\oplus$ and $\otimes$) \cite{liu2021medisc}, it is clear that its simulator exists.
            In addition, $\mathsf{Sim}^{\mathsf{ObliGen}}_{CS_t}$ can also simulate the sequence of opened values $\widehat{\delta}_i,i\in[1,n]$ for the encrypted tuples in $\llb\widetilde{\mathcal{T}}\rrb^A$.
            Specifically, $\mathsf{Sim}^{\mathsf{ObliGen}}_{CS_t}$ legitimately holds the size information of the output sub-database (i.e., the number of $1s$ in $\widehat{\delta}_i,i\in[1,n]$). 
            Moreover, as $\llb\widetilde{\mathcal{T}}\rrb^A$ is produced from $\llb\mathcal{T}\rrb^A$ with a secret random permutation, the positions of $1s$ are random in the revealed sequence $\widehat{\delta}_1,\widehat{\delta}_2,\cdots, \widehat{\delta}_n$. So $\mathsf{Sim}^{\mathsf{ObliGen}}_{CS_t}$ can first generate a zero sequence of length $n$, and then randomly select $|\mathcal{C}|$ $0s$ in the sequence to replace them with $1s$.
            %

            \item $\mathsf{Sim}^{\mathsf{ObliFetch}}_{CS_t}$.
            Note that the operations in the secure component $\mathsf{ObliFetch}$ (i.e., Algorithm \ref{alg:fetch}) that require interaction between {\csa} include the $\mathsf{ObliDom}$ operation (given in Algorithm \ref{alg:dom}), basic secure operations in binary secret sharing domain, and the operations of opening secret-shared values.
            Since $\mathsf{ObliDom}$ consists of $\mathsf{secLEQ}$ and the basic operations (i.e., $\oplus$,  $\otimes$, $\vee$, and $\neg$) in the binary secret sharing domain, the simulator for $\mathsf{ObliDom}$ exists. 
            In addition, during the execution of $\mathsf{ObliFetch}$, two sequences of $\Phi_1'$ and $\Phi_2$ will be generated (i.e., lines \ref{line:fetchOpen1} and \ref{line:fetchOpen2} in Algorithm \ref{alg:fetch}).
            So we need to prove that $\mathsf{Sim}^{\mathsf{ObliFetch}}_{CS_t}$ can  simulate the two sequences. 
            We first prove that $\mathsf{Sim}^{\mathsf{ObliFetch}}_{CS_t}$ can simulate the sequence of $\Phi_1'$.
            Specifically, $\mathsf{Sim}^{\mathsf{ObliFetch}}_{CS_t}$ legitimately holds the number (denoted as $n_d$) of discard operations (i.e., lines \ref{line:discard1}-\ref{line:discard2} in Algorithm \ref{alg:fetch}) carried out during the execution of  $\mathsf{ObliFetch}$.
            This number $n_d$ corresponds to the number of $1s$ in the revealed sequence of $\Phi_1'$.
            Moreover, as $\llbracket\mathcal{C}\rrbracket^A$ is the sub-database of $\llb\widetilde{\mathcal{T}}\rrb^A$, which is produced from $\llb\mathcal{T}\rrb^A$ with a secret random permutation, the positions of $1s$ are random in the revealed sequence of $\Phi_1'$. 
            So $\mathsf{Sim}^{\mathsf{ObliFetch}}_{CS_t}$ can first generate a zero sequence that is the same length as the generated sequence of $\Phi_1'$ during the execution of $\mathsf{ObliFetch}$, and then randomly select $n_d$ $0s$ in the sequence to replace them with $1s$.
            Similarly, $\mathsf{Sim}^{\mathsf{ObliFetch}}_{CS_t}$ can also simulate the sequence of $\Phi_2$ generated during the execution of $\mathsf{ObliFetch}$.
            %
            %
            %
        \end{itemize}
        The proof of Theorem \ref{theo:secure} is completed.
    \end{proof}

    We now explicitly analyze how {\main} protects the data patterns that include
    dominance relationships among database tuples, the number of database tuples that each skyline tuple dominates, and the search access patterns.
    \begin{itemize}
        \item \textbf{Protecting the dominance relationships among data tuples.}
        Note that cloud servers initially invoke $\mathsf{ObliShuff}$ to obliviously shuffle the encrypted tuples in the original encrypted database $[\![{\mathcal{T}}]\!]^A$, and produce the shuffled database $\llb\widetilde{\mathcal{T}}\rrb^A$.
        The subsequent user-defined skyline finding process runs on $\llb\widetilde{\mathcal{T}}\rrb^A$.
        Therefore, even if {\csa} learn the dominance relationships between the tuples in the shuffled database $\widetilde{\mathcal{T}}$, they cannot infer the true dominance relationships among the tuples in the original database $\mathcal{T}$.

        \item \textbf{Protecting the number of database tuples that each skyline tuple dominates.}
        The execution of the BNL algorithm has the potential to reveal the number of database tuples dominated by each skyline tuple, which is a concern for privacy.
        However, in our oblivious realization of the BNL algorithm, as seen in lines 8-10 of Algorithm \ref{alg:fetch}, {\main} has a subtle design to address this issue.
        Specifically, when determining whether an encrypted database tuple $\llb\widetilde{\mathbf{t}}_i\rrb^A$ is dominated by an encrypted candidate skyline tuple $\llb\widetilde{\mathbf{t}}\rrb^A$, {\main} lets {\csa} open a masked version of their encrypted dominance flag $\llb\Phi_1\rrb^B$, rather than directly opening $\Phi_1$, as directly opening $\Phi_1$ has the potential to leak the number of database tuples dominated by each skyline tuple.
        With this strategy, {\csa} are not able to learn the correct dominance relationship between $\llb\widetilde{\mathbf{t}}_i\rrb^A$ and $\llb\widetilde{\mathbf{t}}\rrb^A$, so they are prevented from learning the number of database tuples that each skyline tuple dominates.

        \item \textbf{Protecting the search pattern.} For a user-defined skyline query $\llb Q'\rrb$, the cloud servers {\csa} only receive the secret shares $\langle Q'\rangle_{1}$ and $\langle Q'\rangle_{2}$, respectively. 
        According to the security of ASS \cite{demmler2015aby}, the secret shares are randomly generated and indistinguishable from uniformly random values.
        This means that even if the same private value is encrypted multiple times under ASS, the resulting ciphertext will still be different.
        Therefore, {\main} can hide the search pattern.

        \item \textbf{Protecting the access pattern.}
        %
        %
        Recall that in {\main} the skyline tuples are fetched from the encrypted sub-database $\llb\mathcal{C}\rrb^A$ by $\mathsf{ObliFetch}$.
        However, $\llb\mathcal{C}\rrb^A$ is a sub-database of the shuffled database $\llb\widetilde{\mathcal{T}}\rrb^A$.
        According to the security of oblivious shuffle, {\csa} cannot determine which tuples in the original database $T$ are skyline tuples, even if they learn which encrypted tuples in $\llb\mathcal{C}\rrb^A$ are skyline tuples (in the shuffled space).
        Therefore, {\csa} can hide the access pattern.

    \end{itemize}

    \section{Experiments}
    \label{sec:Experiments}

    \subsection{Setup}

    We implement {\main}'s protocols in C++.
    All experiments are performed on a workstation with 8 AMD Ryzen 7 5800H CPU cores and 16 GB RAM running 64-bit Windows 10.
    We run two threads in parallel on the same machine to simulate the two cloud servers.
    Besides, the network delay of the communication between the two cloud servers is set to 1 ms.
    Similar to the state-of-the-art prior work \cite{zhang2022efficient}, our empirical evaluations use a synthetic dataset which has 20 dimensions and 10000 tuples.
    It is noted that different user-defined skyline queries will give different constrained regions.
    This influences the size of the (encrypted) sub-database over which oblivious user-defined skyline fetching is actually performed.
    Therefore, we introduce the parameter \textit{selectivity} to control the sub-database size. 
    The selectivity is defined as the percentage of tuples that fall within the constrained region and is set to $0.1\%$ in our experiments unless otherwise stated.
    All results in our experiments are the average over 100 skyline queries.

    \subsection{Evaluation on Query Latency}
    
     \begin{figure*}[!t]
    	\centering
    	\begin{minipage}[t]{0.24\linewidth}
    		\centering{\includegraphics[width=\linewidth]{./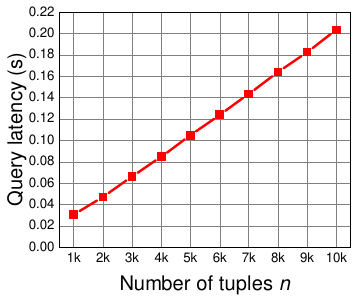}}
    		\caption{Query latency for varying number of tuples $n$ (with the number of dimensions $m=5$ and the number of selected dimensions $k=3$).}
    		\label{fig:ParaN}
    	\end{minipage}
    	\begin{minipage}[t]{0.24\linewidth}
    		\centering{\includegraphics[width=\linewidth]{./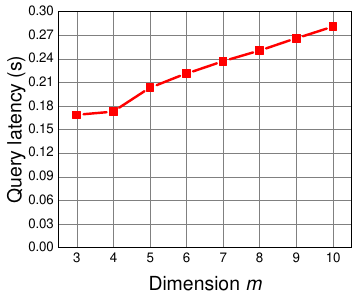}}
    		\caption{Query latency for varying number of dimensions $m$ (with the number of tuples $n=10000$ and the number of selected dimensions $k=3$).}
    		\label{fig:ParaM}
    	\end{minipage}
    	\begin{minipage}[t]{0.24\linewidth}
    		\centering{\includegraphics[width=\linewidth]{./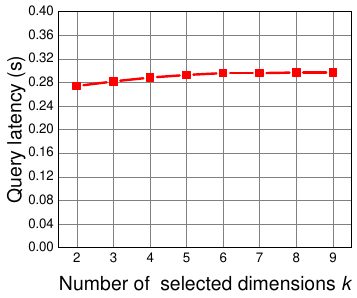}}
    		\caption{Query latency for varying number of selected dimensions $k$ (with the number of tuples $n=10000$ and the number of dimensions $m=10$).}
    		\label{fig:ParaK}
    	\end{minipage}
    	\begin{minipage}[t]{0.24\linewidth}
    		\centering{\includegraphics[width=\linewidth]{./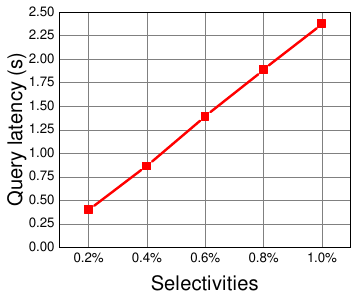}}
    		\caption{Query latency for varying selectivity (with the number of tuples $n=10000$, the number of dimensions $m=5$, and the number of selected dimensions $m=10$).}
    		\label{fig:ParaS}
    	\end{minipage}
        \vspace{-15pt}
    \end{figure*}

    \begin{figure*}[!t]
    	\centering
    	\begin{minipage}[t]{0.24\linewidth}
    		\centering{\includegraphics[width=\linewidth]{./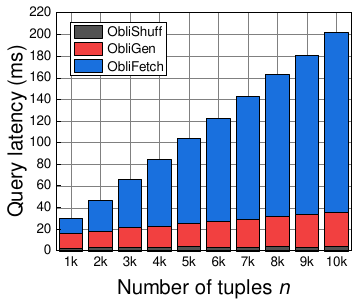}}
    		\caption{Breakdown of query latency for varying number of tuples $n$ (with $m=5,k=3$).}
    		\label{fig:FigN}
    	\end{minipage}
    	\begin{minipage}[t]{0.24\linewidth}
    		\centering{\includegraphics[width=\linewidth]{./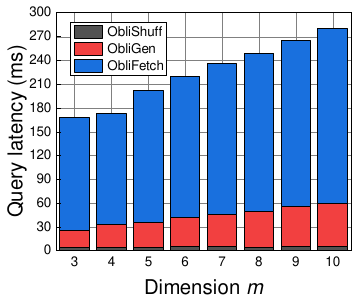}}
    		\caption{Breakdown of query latency for varying number of dimensions $m$ (with $n=10000,k=3$).}
    		\label{fig:FigM}
    	\end{minipage}
    	\begin{minipage}[t]{0.24\linewidth}
    		\centering{\includegraphics[width=\linewidth]{./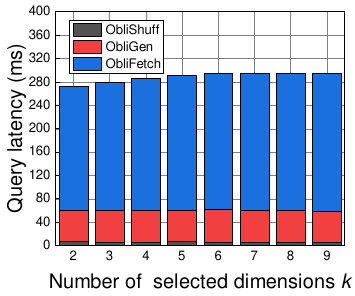}}
    		\caption{Breakdown of query latency for varying number of selected dimensions $k$ (with $n=10000,m=10$).}
    		\label{fig:FigK}
    	\end{minipage}
    	\begin{minipage}[t]{0.24\linewidth}
    		\centering{\includegraphics[width=\linewidth]{./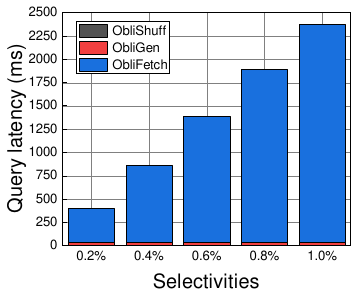}}
    		\caption{Breakdown of query latency for varying selectivity (with $n=10000,m=5,k=3$).}
    		\label{fig:FigS}
    	\end{minipage}
        \vspace{-15pt}
    \end{figure*}

    \begin{figure*}[!t]
        \centering
        \begin{minipage}[t]{0.24\linewidth}
            \centering{\includegraphics[width=\linewidth]{./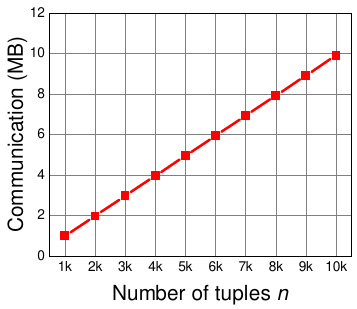}}
            \caption{Communication cost for varying number of tuples $n$ (with $m=5,k=3$).}
            \label{fig:CommN}
        \end{minipage}
        \begin{minipage}[t]{0.24\linewidth}
            \centering{\includegraphics[width=\linewidth]{./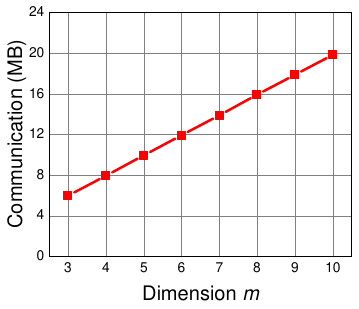}}
            \caption{Communication cost for varying dimensions $m$ (with $n=10000,k=3$).}
            \label{fig:CommM}
        \end{minipage}
        \begin{minipage}[t]{0.24\linewidth}
            \centering{\includegraphics[width=\linewidth]{./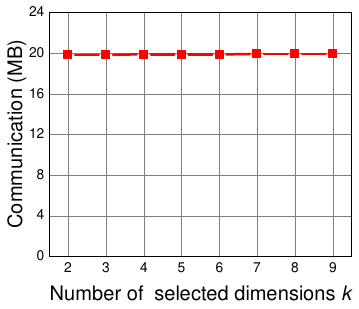}}
            \caption{Communication cost for varying number of selected dimensions $k$ (with $n=10000,m=10$).}
            \label{fig:CommK}
        \end{minipage}
        \begin{minipage}[t]{0.24\linewidth}
            \centering{\includegraphics[width=\linewidth]{./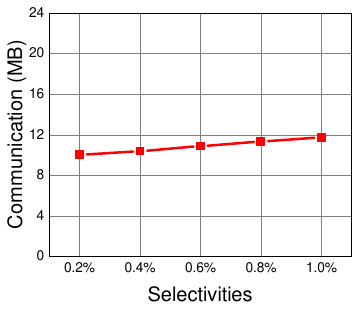}}
            \caption{Communication cost for varying selectivity (with $n=10000,m=5$, and $k=3$).}
            \label{fig:CommS}
        \end{minipage}
        \vspace{-15pt}
    \end{figure*}

    We first evaluate the query latency performance of {\main}.
    This refers to the time it takes the {\csa} to produce the result set of encrypted skyline tuples given the input of an encrypted user-defined skyline query.
    %
    We present the results in Fig. \ref{fig:ParaN}-Fig. \ref{fig:ParaS}, where we respectively vary the number of tuples $n\in\{1\times10^3,2\times10^3,3\times10^3,4\times10^3,5\times10^3,6\times10^3,7\times10^3,8\times10^3,9\times10^3,10\times10^3\}$, the number of dimensions $m\in\{3,4,5,6,7,8,9,10\}$, the number of selected dimensions $k\in\{2,3,4,5,6,7,8,9\}$, and the selectivity $\in\{0.2\%,0.4\%,0.6\%,0.8\%,1.0\%\}$.
    To illustrate the performance of {\main} more clearly, we present a concrete breakdown accordingly in Fig. \ref{fig:FigN}-Fig. \ref{fig:FigS}. 
    %
    It can be observed that the query latency increases linearly with the number of tuples (i.e., $n$), dimensions (i.e., $m$), and selectivity.
    However, from Fig. \ref{fig:ParaK}, we can observe that the query latency increases slowly with the increase in the number of selected dimensions (i.e., $k$).
    The reason is that in order to hide the target dimensions, {\main} pads each user-defined skyline query to the same number of dimensions as database tuples (i.e., $m$).
    Therefore, irrespective of the number of target dimensions, the cloud servers have to process each encrypted user-defined query with $m$ dimensions.
	However, increasing the number of target dimensions (i.e., $k$) leads to more complex constraints on the dominance relationship and results in more skyline tuples, thereby requiring more rounds of skyline finding. 
	As a consequence, the secure component $\mathsf{ObliFetch}$ consumes more time (as shown in Fig. \ref{fig:FigK}), causing the query latency to increase slowly with the increase in $k$.

    %
    %
    %

    \subsection{Evaluation on Communication Performance}

    We now evaluate the online communication performance of {\main}.
    The same parameter settings as the above are used.
    The results are depicted in Fig. \ref{fig:CommN}, Fig. \ref{fig:CommM}, Fig. \ref{fig:CommK} and Fig. \ref{fig:CommS}.
    It is worth noting that the total communication cost increases linearly with $n$, $m$ and the selectivity, but remains almost constant when $k$ is altered.
    For instance, as $n$ increases from $1000$ to $10000$, the communication cost with $m=5$ and $k=3$ increases from approximately $1$MB to $10$MB, but as $k$ increases from $2$ to $9$, the communication cost with $n=10000$ and $m=10$ remains approximately $20$MB.

    \subsection{Comparison to State-of-the-art Prior Work}
   
    We emphasize that a fair performance comparison between {\main} and the state-of-the-art prior work \cite{zhang2022efficient} does not exist due to the security downsides of the latter, as analyzed in Section \ref{sec:Related}.
    Specifically, the work \cite{zhang2022efficient} fails to protect the dominance relationships among the database tuples, the number of database tuples that each skyline tuple dominates, and the access pattern (which in turn leaks the search pattern).
    In contrast, {\main} allows the cloud to obliviously process user-defined skyline queries while keeping all these factors private.
    Despite achieving stronger security guarantees, we note that {\main} still outperforms \cite{zhang2022efficient} in some aspects. 
    For example, when the number of tuples $n$ increases from $1000$ to $10000$, the time for database encryption in \cite{zhang2022efficient} increases from approximately $8$s to $150$s. 
    Similarly, when the number of dimensions $m$ increases from $3$ to $10$ for $n=10000$, the time for query token generation in \cite{zhang2022efficient} increases from approximately $10$ms to $120$ms\footnote{The results are reported in \cite{zhang2022efficient}. As the authors do not give a specific number of dimensions (i.e., $m$), here we assume that $m$ is the maximum value in their experiments, i.e., $m=10$.}.
     In contrast, {\main} can encrypt the database or the user-defined skyline query in just 1ms for the same database size. 
     %
    Although the work \cite{zhang2022efficient} may be more efficient than {\main} in certain scenarios, we emphasize that its efficiency sacrifices security, which hinders its practical usability.
    For instance, under the dataset with $n=10000$, $m=5$, and $k=3$, while the query latency of the approach presented in \cite{zhang2022efficient} is about $18$ms for a selectivity of $0.1\%$ and $90$ms for a selectivity of $1.0\%$, the query latency of {\main} is about $0.2$s and $2.4$s respectively (which should also not have significant perceivable difference in user experience). Yet {\main}'s superior security guarantees outweigh the efficiency gains of \cite{zhang2022efficient}.

    \section{Conclusion}   
    \label{sec:Conclusion}
    
    In this paper, we devise, implement and evaluate {\main}, a new framework for oblivious user-defined skyline queries over encrypted outsourced database with stronger security guarantees over prior art.
    {\main} is designed to not only offer confidentiality protection for the outsourced database, user-defined skyline query, and query results, but also ambitiously for data patterns, including dominance relationships among the database tuples, the number of database tuples that each skyline tuple dominates, and search access patterns.
    Through extensive experiments, we show that while achieving stronger security guarantees over the state-of-the-art, {\main} exhibits superior database and query encryption efficiency as well as practical query latency performance.

\balance

\bibliographystyle{IEEEtran}
\bibliography{my}

\begin{thebibliography}{10}
\providecommand{\url}[1]{#1}
\csname url@samestyle\endcsname
\providecommand{\newblock}{\relax}
\providecommand{\bibinfo}[2]{#2}
\providecommand{\BIBentrySTDinterwordspacing}{\spaceskip=0pt\relax}
\providecommand{\BIBentryALTinterwordstretchfactor}{4}
\providecommand{\BIBentryALTinterwordspacing}{\spaceskip=\fontdimen2\font plus
\BIBentryALTinterwordstretchfactor\fontdimen3\font minus
  \fontdimen4\font\relax}
\providecommand{\BIBforeignlanguage}[2]{{%
\expandafter\ifx\csname l@#1\endcsname\relax
\typeout{** WARNING: IEEEtran.bst: No hyphenation pattern has been}%
\typeout{** loaded for the language `#1'. Using the pattern for}%
\typeout{** the default language instead.}%
\else
\language=\csname l@#1\endcsname
\fi
#2}}
\providecommand{\BIBdecl}{\relax}
\BIBdecl

\bibitem{balke2004efficient}
W.~Balke, U.~G{\"{u}}ntzer, and J.~X. Zheng, ``Efficient distributed skylining
  for web information systems,'' in \emph{Proc. of EDBT}, 2004.

\bibitem{huang2006skyline}
Z.~Huang, C.~S. Jensen, H.~Lu, and B.~C. Ooi, ``Skyline queries against mobile
  lightweight devices in {MANETs},'' in \emph{Proc. of IEEE ICDE}, 2006.

\bibitem{deng2007multi}
K.~Deng, X.~Zhou, and H.~T. Shen, ``Multi-source skyline query processing in
  road networks,'' in \emph{Proc. of IEEE ICDE}, 2007.

\bibitem{QinW0018}
Z.~Qin, J.~Weng, Y.~Cui, and K.~Ren, ``Privacy-preserving image processing in
  the cloud,'' \emph{{IEEE} Cloud Computing}, vol.~5, no.~2, pp. 48--57, 2018.

\bibitem{JiangWHWLSR21}
P.~Jiang, Q.~Wang, M.~Huang, C.~Wang, Q.~Li, C.~Shen, and K.~Ren, ``Building
  in-the-cloud network functions: Security and privacy challenges,''
  \emph{Proceedings of the IEEE}, vol. 109, no.~12, pp. 1888--1919, 2021.

\bibitem{liu2019secure}
J.~Liu, J.~Yang, L.~Xiong, and J.~Pei, ``Secure and efficient skyline queries
  on encrypted data,'' \emph{{IEEE} Transactions on Knowledge and Data
  Engineering}, vol.~31, no.~7, pp. 1397--1411, 2019.

\bibitem{zhang2022efficient}
S.~Zhang, S.~Ray, R.~Lu, Y.~Zheng, Y.~Guan, and J.~Shao, ``Towards efficient
  and privacy-preserving user-defined skyline query over single cloud,''
  \emph{{IEEE} Transactions on Dependable and Secure Computing}, vol.~20,
  no.~2, pp. 1319--1334, 2023.

\bibitem{ding2021efficient}
X.~Ding, Z.~Wang, P.~Zhou, K.-K.~R. Choo, and H.~Jin, ``Efficient and
  privacy-preserving multi-party skyline queries over encrypted data,''
  \emph{{IEEE} Transactions on Information Forensics and Security}, vol.~16,
  pp. 4589--4604, 2021.

\bibitem{liu2017secure}
J.~Liu, J.~Yang, L.~Xiong, and J.~Pei, ``Secure skyline queries on cloud
  platform,'' in \emph{Proc. of IEEE ICDE}, 2017.

\bibitem{zheng2022secskyline}
Y.~Zheng, W.~Wang, S.~Wang, X.~Jia, H.~Huang, and C.~Wang, ``Secskyline: Fast
  privacy-preserving skyline queries over encrypted cloud databases,''
  \emph{{IEEE} Transactions on Knowledge and Data Engineering}, 2022, doi:
  10.1109/TKDE.2022.3220595.

\bibitem{wang2020stargazing}
J.~Wang, M.~Du, and S.~S.~M. Chow, ``Stargazing in the dark: Secure skyline
  queries with {SGX},'' in \emph{Proc. of DASFAA}, 2020.

\bibitem{liu2018pusc}
X.~Liu, K.-K.~R. Choo, R.~H. Deng, Y.~Yang, and Y.~Zhang, ``Pusc:
  Privacy-preserving user-centric skyline computation over multiple encrypted
  domains,'' in \emph{Proc. of TrustCom/BigDataSE}, 2018.

\bibitem{demmler2015aby}
D.~Demmler, T.~Schneider, and M.~Zohner, ``{ABY} - {A} framework for efficient
  mixed-protocol secure two-party computation,'' in \emph{Proc. of NDSS}, 2015.

\bibitem{borzsonyi2001skyline}
S.~B{\"o}rzs{\"o}nyi, D.~Kossmann, and K.~Stocker, ``The skyline operator,'' in
  \emph{Proc. of IEEE ICDE}, 2001.

\bibitem{liu2015finding}
J.~Liu, H.~Zhang, L.~Xiong, H.~Li, and J.~Luo, ``Finding probabilistic
  k-skyline sets on uncertain data,'' in \emph{Proc. of ACM CIKM}, 2015.

\bibitem{pei2007probabilistic}
J.~Pei, B.~Jiang, X.~Lin, and Y.~Yuan, ``Probabilistic skylines on uncertain
  data,'' in \emph{Proc. of VLDB}, 2007.

\bibitem{dellis2006constrained}
E.~Dellis, A.~Vlachou, I.~Vladimirskiy, B.~Seeger, and Y.~Theodoridis,
  ``Constrained subspace skyline computation,'' in \emph{Proc. of ACM CIKM},
  2006.

\bibitem{pei2005catching}
J.~Pei, W.~Jin, M.~Ester, and Y.~Tao, ``Catching the best views of skyline: A
  semantic approach based on decisive subspaces,'' in \emph{Proc. of VLDB},
  2005.

\bibitem{liu2015findinga}
J.~Liu, L.~Xiong, J.~Pei, J.~Luo, and H.~Zhang, ``Finding pareto optimal
  groups: Group-based skyline,'' \emph{Proceedings of the VLDB Endowment},
  vol.~8, no.~13, pp. 2086--2097, 2015.

\bibitem{yu2017fast}
W.~Yu, Z.~Qin, J.~Liu, L.~Xiong, X.~Chen, and H.~Zhang, ``Fast algorithms for
  pareto optimal group-based skyline,'' in \emph{Proc. of ACM CIKM}, 2017.

\bibitem{bothe2014skyline}
S.~Bothe, A.~Cuzzocrea, P.~Karras, and A.~Vlachou, ``Skyline query processing
  over encrypted data: An attribute-order-preserving-free approach,'' in
  \emph{Proc. of International Workshop on Privacy and Security of Big Data},
  2014.

\bibitem{hahnel2017high}
M.~H{\"a}hnel, W.~Cui, and M.~Peinado, ``High-resolution side channels for
  untrusted operating systems,'' in \emph{Proc. of USENIX ATC}, 2017.

\bibitem{van2017telling}
J.~Van~Bulck, N.~Weichbrodt, R.~Kapitza, F.~Piessens, and R.~Strackx, ``Telling
  your secrets without page faults: Stealthy page table-based attacks on
  enclaved execution,'' in \emph{Proc. of USENIX Security}, 2017.

\bibitem{lee2017inferring}
S.~Lee, M.-W. Shih, P.~Gera, T.~Kim, H.~Kim, and M.~Peinado, ``Inferring
  fine-grained control flow inside sgx enclaves with branch shadowing,'' in
  \emph{Proc. of USENIX Security}, 2017.

\bibitem{lee2020off}
D.~Lee, D.~Jung, I.~T. Fang, C.-C. Tsai, and R.~A. Popa, ``An off-chip attack
  on hardware enclaves via the memory bus,'' in \emph{Proc. of USENIX
  Security}, 2020.

\bibitem{wang2022efficient}
Z.~Wang, X.~Ding, H.~Jin, and P.~Zhou, ``Efficient secure and verifiable
  location-based skyline queries over encrypted data,'' \emph{Proceedings of
  the VLDB Endowment}, vol.~15, no.~9, pp. 1822--1834, 2022.

\bibitem{RiaziWTS0K18}
M.~S. Riazi, C.~Weinert, O.~Tkachenko, E.~M. Songhori, T.~Schneider, and
  F.~Koushanfar, ``Chameleon: {A} hybrid secure computation framework for
  machine learning applications,'' in \emph{Proc. of ACM AsiaCCS}, 2018.

\bibitem{meng2018top}
X.~Meng, H.~Zhu, and G.~Kollios, ``Top-k query processing on encrypted
  databases with strong security guarantees,'' in \emph{Proc. of IEEE ICDE},
  2018.

\bibitem{chen2020metal}
W.~Chen and R.~A. Popa, ``Metal: {A} metadata-hiding file-sharing system,'' in
  \emph{Proc. of NDSS}, 2020.

\bibitem{du2020graphshield}
M.~Du, S.~Wu, Q.~Wang, D.~Chen, P.~Jiang, and A.~Mohaisen, ``Graphshield:
  Dynamic large graphs for secure queries with forward privacy,'' \emph{{IEEE}
  Transactions on Knowledge and Data Engineering}, vol.~34, no.~7, pp.
  3295--3308, 2020.

\bibitem{cui2020svknn}
N.~Cui, X.~Yang, B.~Wang, J.~Li, and G.~Wang, ``Svknn: Efficient secure and
  verifiable k-nearest neighbor query on the cloud platform,'' in \emph{Proc.
  of IEEE ICDE}, 2020.

\bibitem{Mozilla}
{Mozilla Security Blog}, ``{Next steps in privacy-preserving Telemetry with
  Prio.}'' online at \url{
  https://blog.mozilla.org/security/2019/06/06/next-steps-in-privacy-preserving-telemetry-with-prio/},
  2022.

\bibitem{WhitePaper}
{Apple and Google}, ``{Exposure Notification Privacy-preserving Analytics
  (ENPA) White Paper.}'' online at \url{
  https://covid19-static.cdn-apple.com/applications/covid19/current/static/contact-tracing/pdf/ENPA_White_Paper.pdf},
  2021.

\bibitem{eskandarian2022clarion}
S.~Eskandarian and D.~Boneh, ``Clarion: Anonymous communication from multiparty
  shuffling protocols,'' in \emph{Proc. of NDSS}, 2022.

\bibitem{liu2021medisc}
X.~Liu, Y.~Zheng, X.~Yuan, and X.~Yi, ``Medisc: Towards secure and lightweight
  deep learning as a medical diagnostic service,'' in \emph{Proc. of ESORICS},
  2021.

\bibitem{lindell2017how}
Y.~Lindell, ``How to simulate it - a tutorial on the simulation proof
  technique,'' in \emph{Tutorials on the Foundations of Cryptography}, 2017,
  pp. 277--346.

\bibitem{canetti2000security}
R.~Canetti, ``Security and composition of multiparty cryptographic protocols,''
  \emph{Journal of Cryptology}, vol.~13, no.~1, pp. 143--202, 2000.

\bibitem{katz2005handling}
J.~Katz and Y.~Lindell, ``Handling expected polynomial-time strategies in
  simulation-based security proofs,'' in \emph{Proc. of TCC}, 2005.

\bibitem{curran2019procsa}
M.~Curran, X.~Liang, H.~Gupta, O.~Pandey, and S.~R. Das, ``Procsa: Protecting
  privacy in crowdsourced spectrum allocation,'' in \emph{Proc. of ESORICS},
  2019.

\end{thebibliography}

\end{document}